\def\BibTeX{{\rm B\kern-.05em{\sc i\kern-.025em b}\kern-.08em
    T\kern-.1667em\lower.7ex\hbox{E}\kern-.125emX}}
\newcommand{\1}{\mbox{\fontencoding{U}\fontfamily{bbold}\selectfont1}}
\newcommand{\0}{\mbox{\fontencoding{U}\fontfamily{bbold}\selectfont0}}
\DeclareMathAlphabet{\mathcal}{OMS}{cmsy}{m}{n}
\newtheorem{theorem}{Theorem}
\newtheorem{lemma}[theorem]{Lemma}
\renewcommand{\natural}{{\mathbb{N}}}
\newcommand{\real}{\ensuremath{\mathbb{R}}}
\DeclareSymbolFont{bbold}{U}{bbold}{m}{n}
\DeclareSymbolFontAlphabet{\mathbbold}{bbold}
\newcommand{\Prob}{\mathbb{P}}
\newcommand{\tauvec}{\pmb{\tau}}
\newcommand{\pivec}{\pmb{\pi}}
\newcommand{\pvec}{\bm{p}}
\newcommand{\qvec}{\bm{q}}
\newcommand{\me}{\mathrm{e}}
\newcommand{\e}{\bm{e}}
\begin{document}

\title{A Stochastic Surveillance Stackelberg Game: \\ Co-Optimizing Defense Placement and Patrol Strategy$^{*}$}

\author{Yohan John$^{1}$, Gilberto D\'iaz-Garc\'ia$^{1}$, Xiaoming Duan$^{2}$, Jason R. Marden$^{1}$, and Francesco Bullo$^{1}$
\thanks{$^{*}$This work was supported by Air Force Office of Scientific Research under Grant FA9550-15-1-0138. See https://arxiv.org/abs/2308.14714 for an extended version.}
\thanks{$^{1}$The authors are with the Center for Control, Dynamical Systems, and Computation, UC Santa Barbara, Santa Barbara, CA 93106-5070, USA. Email: \{{\tt\small yohanjohn@ucsb.edu}, {\tt\small gdiaz-garcia@ucsb.edu}, {\tt\small jrmarden@ucsb.edu}, {\tt\small bullo@ucsb.edu}\}}%
\thanks{$^{2}$Xiaoming Duan is with the Department of Automation, Shanghai Jiao Tong University, Shanghai, China 200240. Email: \{{\tt\small xduan@sjtu.edu.cn}\}}%
}

\maketitle


\begin{abstract}
Stochastic patrol routing is known to be advantageous in adversarial settings; however, the optimal choice of stochastic routing strategy is dependent on a model of the adversary. We adopt a worst-case omniscient adversary model from the literature and extend the formulation to accommodate heterogeneous defenses at the various nodes of the graph. Introducing this heterogeneity leads to interesting new patrol strategies. We identify efficient methods for computing these strategies in certain classes of graphs. We assess the effectiveness of these strategies via comparison to an upper bound on the value of the game. Finally, we leverage the heterogeneous defense formulation to develop novel defense placement algorithms that complement the patrol strategies.
\end{abstract}

\begin{IEEEkeywords}
Game theory, Markov process, optimization, surveillance.
\end{IEEEkeywords}


\section{Introduction}

\subsubsection*{Problem Description \& Motivation}
Protection of strategic locations such as borders and critical infrastructure requires physical agents to patrol. However, in the presence of sophisticated attackers, deterministic patrol routes can be ineffective. As a result, so-called stochastic surveillance employs randomized patrol routing to prevent attackers from taking advantage of patterns in the movement of surveillance agents. The question then becomes how to design effective stochastic patrols against a resourceful adversary. This strategic interaction is a natural application for game theory. The Stackelberg game formulation, in particular, has become popular with successful deployments in real-world applications including security at Los Angeles International Airport \cite{JP-MJ-JM-FO-CP-MT-CW-PP-SK:08} and the port of Boston~\cite{ES-BA-RY-MT-CB-JD-BM-GM:12}. In this brief, we design game-theoretically grounded patrol strategies specialized for the topology of the environment and an omniscient attacker.

\subsubsection*{Literature Review}
We provide a brief review here and refer the reader to \cite{NB:22} for a thorough recent survey of the robotic surveillance and patrolling literature. The work of Chevaleyre \cite{YC:04} examining deterministic cyclic and partitioning strategies for patrolling ignited a flurry of interest. One ensuing line of inquiry centers on adversarial settings~\cite{NB-NG-FA:09},~\cite{JG-JB:05}. Markov chains were identified as a natural tool for modeling stochastic patrol strategies~\cite{NA-SK-GAK:08},~\cite{GC-AS:11}, and graphs are a common choice for modeling patrol areas \cite{NA-VS-GAK-SK:08, NB-NG-FA:12}. Subsequent game-theoretic analyses model the patrolling problem as a strategic game between the attacker and the surveillance agent(s) \cite{CK-MJ-JT-JP-FO-MT:09, PP-JPP-JM-MT-FO-SK:08}. 

Recently, optimization approaches have been proposed to design effective Markov chains for a given graph~\cite{XD-FB:20a, XD-MG-FB:17o, RP-PA-FB:14b}. The review~\cite{XD-FB:20a} recaps several formulations for computing optimal Markov Chains. We adopt the formulation proposed in~\cite{XD-DP-FB:19b}, the only work highlighted as ``of major importance" in the survey~\cite{NB:22}. In~\cite{XD-DP-FB:19b}, the authors formulate a zero-sum Stackelberg game to identify the Markov chain patrol strategy with the highest capture probability for the surveillance agent against an omniscient attacker. Due to the zero-sum assumption, Strong Stackelberg Equilibria are also Nash Equilibria in the simultaneous-move version of the game~\cite{DK-ZY-CK-VC-MT:11}. However, if the motivations of the attacker are well-understood, then explicitly modeling their payoffs is important~\cite{VMB:07}. The authors of~\cite{XD-DP-FB:19b} provide a universal upper bound on the capture probability, optimal strategies for star and line graphs, and a constant factor suboptimal strategy for complete graphs. However, one limitation is the assumption that the attacker needs the same number of time periods to complete an attack on each node in the graph.

\subsubsection*{Contributions}
The focus of this brief is to relax the uniform attack duration assumption in \cite{XD-DP-FB:19b} and account for varying levels of vulnerability across the network. The resulting model induces new patrol strategies and prompts new questions. For example, we view the nodal attack durations as a limited resource and explore the novel co-optimization of patrol strategy and resource allocation. Our contributions to the literature are fourfold: (i) formulation of the stochastic surveillance Stackelberg game with heterogeneous attack duration; (ii) an upper bound on the optimal value of the game; (iii) effective and efficiently computable strategies for complete and complete bipartite graphs; (iv) accompanying optimal and efficiently computable defense placement algorithms for complete and complete bipartite graphs. We also present bounds on the suboptimality of the proposed strategies and prove that the strategy for complete bipartite graphs is optimal for star graphs. 
\begin{arXiv}
    In the Appendix, the same strategy is shown to be constant factor suboptimal for complete bipartite graphs in the setting of \cite{XD-DP-FB:19b}, i.e., with uniform attack duration.
\end{arXiv}

\section{Preliminaries}

Let $\real$ be the set of real numbers and $\natural_{\geq 2}$ be the set of natural numbers greater than or equal to 2. Let $\1_n,\0_n$ denote column vectors in $\real^n$ with every entry being $1$ and $0$, respectively. $\e_i$ is the $i$-th basis vector. $\textup{diag}(A)$ denotes a diagonal matrix with the same entries on the diagonal as the matrix $A$. $\textup{diag}(\bm{a})$ denotes a diagonal matrix whose diagonal entries are the values from the vector $\bm{a}$. 

\subsection{Markov Chains}
Stochastic surveillance strategies can be represented by a first-order discrete-time Markov chain subordinate to a strongly connected digraph $\mathcal{G} = (\mathcal{V},\mathcal{E})$. Nodes in $\mathcal{V}$ represent points of interest, and edges in $\mathcal{E}$ represent paths between them. A graph with $\lvert \mathcal{V} \rvert = n$ nodes leads to an $n$-state Markov chain with a row-stochastic transition matrix $P\in \real^{n\times n}$ where each entry $p_{ij}$ represents the probability of the surveillance agent moving along the edge from node $i$ to node $j$. A Markov chain $P$ is said to be irreducible if an agent following this surveillance strategy  can reach every node from every node. An irreducible Markov chain $P$ has a unique stationary distribution $\pivec \in \real^{n}$ that satisfies $\pivec^\top P = \pivec^\top$ with $\pivec^\top \1_n = 1$ and $\pi_i \geq 0$ for all $i$. The stationary distribution $\pivec$ represents the agent's long-term visit frequency to each node in the graph. 

Let $X_k\in \{1,\ldots,n\}$ be the value of a Markov chain $P$ at time period $k$. The first hitting time $T_{ij} = \min \{k \mid X_0 = i, X_k = j, k \ge 1\}$ is a random variable representing the number of time periods between the agent leaving node $i \in \mathcal{V}$ and its first arrival at node $j \in \mathcal{V}$. Let $F_k$ be the first hitting time probability matrix where the $(i,j)$-th element is the probability that the surveillance agent's first hitting time from node $i$ to node $j$ is $k$ time periods, i.e., $F_k(i,j) = \Prob(T_{ij} = k)$. The $F_k$ matrices satisfy the following recursion where $F_1 = P$ \cite[Ch.~5, Eq.~(2.4)]{EC:13}:
\begin{equation} \label{eq:F_k}
    F_{k+1} = P(F_k - \textup{diag}(F_k)).
\end{equation}

\subsection{Stackelberg Game}
Consider a surveillance agent patrolling a graph according to a given Markov chain $P$. The attacker chooses a single node and remains stationary there. The surveillance agent captures the attacker if she visits the attacker's node within $\tau \in \natural$ time periods. Otherwise, the attacker succeeds. The Stackelberg game formulation is given by the following \cite{XD-DP-FB:19b}:
\begin{equation} \label{opt:Stack_compact}
\begin{aligned}
    \max_{P\in \real^{n \times n}} \quad & \min_{i,j}\{ \Prob(T_{ij} (P) \leq \tau) \} \\
    \text{s.t.} \quad & P \1_n = \1_n, \\
    & p_{ij} \ge 0, \quad \quad \forall(i,j) \in \mathcal{E}, \\
    & p_{ij} = 0, \quad \quad \forall(i,j) \notin \mathcal{E}.
\end{aligned}
\end{equation}
The inner minimization represents the attacker attacking the node $j$ when the surveillance agent is at node $i$.
This model represents optimal behavior for an omniscient attacker and thus provides a lower bound on the capture probability for any other attacker strategy. The outer maximization reflects the surveillance agent's goal of choosing $P$ to maximize the capture probability. Introducing an auxiliary variable $\mu \in [0,1]$ that represents the capture probability and utilizing~\eqref{eq:F_k}, we can reformulate~\eqref{opt:Stack_compact}~\cite{XD-DP-FB:19b}:
\begin{subequations} \label{opt:Stack}
\begin{align}
    & \max_{\mu \in \real,P \in \real^{n \times n}} \quad \mu \\
    & \text{s.t.} \quad \mu \1_n \1_n^\top \leq \sum\nolimits_{k=1}^\tau F_k, \label{const:mu_XD} \\
    & \quad \quad F_1 = P, \\
    & \quad \quad F_{k+1} = P(F_k - \textup{diag}(F_k)), \quad 1\leq k \leq \tau - 1, \\
    & \quad \quad P \1_n = \1_n, \\
    & \quad \quad p_{ij} \ge 0, \quad \forall(i,j) \in \mathcal{E}, \\
    & \quad \quad p_{ij} = 0, \quad \forall(i,j) \notin \mathcal{E},
\end{align}
\end{subequations}
where the inequality~\eqref{const:mu_XD} is element-wise. An optimal solution $P^*$ to~\eqref{opt:Stack} is irreducible and thus has a unique stationary distribution $\pivec^*$ \cite{XD-DP-FB:19b}. 
 
\section{Heterogeneous Attack Duration}
In this section we extend the formulation in~\cite{XD-DP-FB:19b} to allow the amount of time required for a successful attack to vary for different nodes in the network.
We replace the scalar attack duration $\tau \in \natural$ by a vector $\tauvec \in \natural^n$ where each entry $\tau_i$ represents the attack duration for the $i$-th node. The Stackelberg game~\eqref{opt:Stack} can be modified accordingly:
\begin{subequations} \label{opt:Stack_tauvec}
\begin{align}
    & \max_{\mu \in \real,P \in \real^{n \times n}} \quad \mu \\
    & \text{s.t.} \quad \mu \1_n \1_n^\top \leq \sum\nolimits_{j=1}^n \sum\nolimits_{k=1}^{\tau_j} F_k \e_j \e_j^\top, \label{const:mu} \\
    & \quad \quad F_1 = P, \\
    & \quad \quad F_{k+1} = P(F_k - \textup{diag}(F_k)), \ 1\leq k \leq \tau_\textup{max} {-} 1, \label{const:recursion} \\
    & \quad \quad P \1_n = \1_n, \\
    & \quad \quad p_{ij} \ge 0, \quad \forall(i,j) \in \mathcal{E}, \\
    & \quad \quad p_{ij} = 0, \quad \forall(i,j) \notin \mathcal{E},
\end{align}
\end{subequations}
where $\tau_\textup{max} := \max_{1\leq i \leq n} \tau_i$. The modified RHS of~\eqref{const:mu} yields a matrix such that the $j$-th column is the sum of the $j$-th columns of $F_k$ for $k\in\{1,\ldots,\tau_j\}$. We will show that, in contrast with~\cite{XD-DP-FB:19b} where the proposed strategies $P$ can be written down explicitly as a function of $n$, effective patrol strategies for special cases of~\eqref{opt:Stack_tauvec} must be identified via the solution of nonlinear equation(s) parametrized by $\tauvec$. For nontrivial solutions to~\eqref{opt:Stack_tauvec},  $\tauvec$ must satisfy:
\begin{enumerate}
    \item Each $\tau_i$ must be greater than or equal to the number of ``hops" to the furthest node from node $i$ in the graph $\mathcal{G}$. Otherwise, the capture probability is zero.
    \item At least one $\tau_i$ must be less than the length of every closed cycle on $\mathcal{G}$ that visits all nodes. Otherwise, the surveillance agent can achieve a capture probability of one by using this cycle as a deterministic strategy. 
\end{enumerate}

\subsection{Upper Bound for Optimal Capture Probability}
We begin our analysis of~\eqref{opt:Stack_tauvec} by presenting a general upper bound for the optimal capture probability $\mu^*$. 
\begin{theorem}[Upper Bound for Optimal Capture Probability] \label{thm:up_bound}
Given a strongly connected digraph $\mathcal{G} = (\mathcal{V},\mathcal{E})$ with $\lvert \mathcal{V} \rvert = n$ and a vector $\tauvec \in \natural^n$ of attack durations corresponding to each node in $\mathcal{V}$, the optimal strategy $P^*$ has a unique stationary distribution $\pivec^*$ and corresponding optimal capture probability $\mu^* \leq \min_{1\leq i \leq n} \pi_i^* \tau_i$.
\end{theorem}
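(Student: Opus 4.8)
The plan is to read the bound straight off the feasibility constraint \eqref{const:mu} and then control the resulting quantity using two elementary facts about the stationary chain driven by $P^*$: a union bound over the attack window, and invariance of the stationary marginals. First I would settle the existence and uniqueness of $\pivec^*$. As in \cite{XD-DP-FB:19b}, an optimal $P^*$ must be irreducible, since if some node $j$ were unreachable from some node $i$ then $\Prob(T_{ij}(P^*)\le\tau_j)=0$, forcing $\mu^*=0$; under the standing non-triviality conditions on $\tauvec$ (in particular condition (i) before the theorem) a strictly positive value is attainable, so a non-irreducible $P^*$ cannot be optimal. Irreducibility of $P^*$ then delivers the unique stationary distribution $\pivec^*$ via Perron--Frobenius.

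For the bound, fix an arbitrary target node $j$. Reading off the $(i,j)$-entry of \eqref{const:mu} and using $F_k(i,j)=\Prob(T_{ij}=k)$ shows that feasibility of $(\mu^*,P^*)$ forces
\[
\mu^* \;\le\; \sum\nolimits_{k=1}^{\tau_j} F_k(i,j) \;=\; \Prob\big(T_{ij}(P^*)\le\tau_j\big) \qquad \text{for every } i .
\]
Now $\Prob(T_{ij}(P^*)\le\tau_j)$ is exactly the probability that the chain started at $X_0=i$ satisfies $X_k=j$ for some $k\in\{1,\dots,\tau_j\}$ --- and this description is valid for $i=j$ as well, with $T_{jj}$ the first return time. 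Multiplying the displayed inequality by $\pi_i^*\ge 0$ and summing over $i$ (the $\pi_i^*$ sum to one) collapses the per-$i$ constraints into a single statement about the stationary chain:
\[
\mu^* \;\le\; \sum\nolimits_{i=1}^n \pi_i^*\,\Prob\big(T_{ij}(P^*)\le\tau_j\big) \;=\; \Prob_{\pivec^*}\!\big(\exists\,k\in\{1,\dots,\tau_j\}:X_k=j\big),
\]
where the right-hand side is evaluated along the chain with $X_0\sim\pivec^*$.

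The final step is a union bound over the $\tau_j$ time slots combined with invariance of the stationary marginals, $\Prob_{\pivec^*}(X_k=j)=\pi_j^*$ for every $k\ge 0$: the probability above is at most $\sum_{k=1}^{\tau_j}\Prob_{\pivec^*}(X_k=j)=\pi_j^*\tau_j$. Since $j$ was arbitrary, $\mu^*\le\min_{1\le i\le n}\pi_i^*\tau_i$, which is the claim. I do not expect a deep obstacle here; the point that needs care is the bookkeeping in the middle step --- verifying that ``$T_{ij}\le\tau_j$'' and ``$X_k=j$ for some $k\le\tau_j$'' describe the same event uniformly in $i$ (the $i=j$ case with the return time being the only mild subtlety), and that the $\pi_i^*$-weighted average of the per-$i$ constraints is precisely the $\pivec^*$-initialized hitting probability. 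The irreducibility of $P^*$ also warrants an explicit line rather than a bare citation, since the heterogeneous objective in \eqref{const:mu} is not literally the one treated in \cite{XD-DP-FB:19b}.
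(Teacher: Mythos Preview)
Your argument is correct and is the probabilistic twin of the paper's algebraic proof. Both proceed by averaging the per-$i$ constraint in \eqref{const:mu} against $\pivec^*$ and then bounding the result by $\pi_j^*\tau_j$; the paper does the second step by pre-multiplying the recursion \eqref{const:recursion} by $(\pivec^*)^\top$ to obtain $(\pivec^*)^\top F_k \le (\pivec^*)^\top$ for every $k$, whereas you interpret the averaged quantity as the stationary-initialized hitting probability $\Prob_{\pivec^*}(\exists\,k\le\tau_j:X_k=j)$ and apply a union bound together with $\Prob_{\pivec^*}(X_k=j)=\pi_j^*$. These are the same statement in different clothing --- the paper's inequality $(\pivec^*)^\top F_k\,\e_j \le \pi_j^*$ is precisely the observation $\{T_{\cdot j}=k\}\subseteq\{X_k=j\}$ followed by stationarity --- so neither route gains anything over the other; your version makes the role of the union bound explicit, while the paper never leaves the matrix formalism.
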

\begin{proof}
\begin{TAC}
    The proof closely resembles that of Theorem 5 in \cite{XD-DP-FB:19b} and is available in the arXiv version of this brief~\cite{YJ-GDG-XD-JRM-FB:23}. \qedhere
\end{TAC}

\begin{arXiv}
    Let $(\mu^*,P^*)$ be an optimal solution to~\eqref{opt:Stack_tauvec}. Then $P^*$ is irreducible and has a unique stationary distribution $\pivec^*$. Pre-multiply both sides of~\eqref{const:recursion} by $(\pivec^*)^\top$ to obtain
    \begin{equation} \label{eq:ineq_chain}
    \begin{aligned}
        (\pivec^*)^\top F_{k+1} & = (\pivec^*)^\top (F_k - \textup{diag}(F_k)) \leq (\pivec^*)^\top F_k \\
        & \leq (\pivec^*)^\top F_1 = (\pivec^*)^\top
    \end{aligned} 
    \end{equation}
    using $F_1 = P$, the fact that $P$ is row-stochastic, and the definition of the stationary distribution. Since $(\mu^*,P^*)$ is an optimal solution to~\eqref{opt:Stack_tauvec}, it satisfies~\eqref{const:mu}:
    \begin{equation}
        \mu^* \1_n \1_n^\top \leq \sum_{j=1}^n \sum_{k=1}^{\tau_j} F_k E_j.
    \end{equation}
    Pre-multiply both sides by $(\pivec^*)^\top$ and use the inequality~\eqref{eq:ineq_chain} to obtain
    \begin{equation}
    \begin{aligned}
        \mu^* \1_n^\top & \leq \sum_{j=1}^n \sum_{k=1}^{\tau_j} (\pivec^*)^\top F_k E_j \leq \sum_{j=1}^n \sum_{k=1}^{\tau_j} (\pivec^*)^\top  E_j \\
        & \leq (\pivec^*)^\top \textup{diag}(\tauvec)
    \end{aligned}
    \end{equation}
    which implies the desired result $\mu^* \leq \min_{1\leq i \leq n} \pi_i^* \tau_i$.
\end{arXiv}
\end{proof}

\subsection{Complete Graphs} \label{sect:opt_complete}
In this section, we present a method for computing an effective patrol strategy with bounded suboptimality for complete graphs. Consider the class of strategies $P = \1_n\pivec^\top$, a generalization of the uniform random walk strategy $P=(1/n)\1_n \1_n^\top$ shown in \cite{XD-DP-FB:19b} to be constant factor suboptimal in the uniform attack duration setting. The following lemma gives the capture probability for these strategies.
\begin{lemma}[Capture Probability for Complete Graph Strategy] \label{lem:pi_strat}
Given a complete digraph $\mathcal{G} = (\mathcal{V},\mathcal{E})$ with $|\mathcal{V}|=n$ and a vector $\tauvec \in \natural^n$ of attack durations corresponding to each node in $\mathcal{V}$, the capture probability for the strategy $P = \1_n\pivec^\top$ is $\mu = \min_{1\leq i \leq n} \bigl[ 1 - (1-\pi_i)^{\tau_i} \bigr]$.
\end{lemma}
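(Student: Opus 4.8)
The plan is to compute, for the strategy $P = \1_n \pivec^\top$, the probability $\Prob(T_{ij} \le \tau_j)$ for every ordered pair $(i,j)$, and then take the minimum over $i,j$ as dictated by the inner minimization in~\eqref{opt:Stack_compact}. The key observation is that $P = \1_n\pivec^\top$ is a rank-one row-stochastic matrix: at every time step, regardless of the current state, the agent moves to node $j$ with probability $\pi_j$, independently of the past. So the sequence of visited nodes is i.i.d.\ with distribution $\pivec$, and the first hitting time from $i$ to $j$ (for $k \ge 1$) does not depend on $i$ at all.

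First I would make this precise: since $F_1 = P = \1_n\pivec^\top$ has identical rows, and the recursion~\eqref{const:recursion} is $F_{k+1} = P(F_k - \textup{diag}(F_k))$, an easy induction shows every $F_k$ has identical rows. Concretely, for $k \ge 1$ the event $\{T_{ij} = k\}$ given $X_0 = i$ is exactly the event that the i.i.d.\ sequence $X_1, X_2, \dots$ first equals $j$ at time $k$; since each $X_m$ equals $j$ with probability $\pi_j$ independently, $\Prob(T_{ij} = k) = (1-\pi_j)^{k-1}\pi_j$, a geometric distribution with success probability $\pi_j$, independent of $i$. Summing, $\Prob(T_{ij} \le \tau_j) = \sum_{k=1}^{\tau_j} (1-\pi_j)^{k-1}\pi_j = 1 - (1-\pi_j)^{\tau_j}$.

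Then the capture probability is
\[
\mu = \min_{i,j} \Prob(T_{ij} \le \tau_j) = \min_{1 \le j \le n} \bigl[ 1 - (1-\pi_j)^{\tau_j} \bigr],
\]
since the expression no longer depends on $i$; renaming the index $j$ to $i$ gives the claimed formula. For completeness I would note that on a complete digraph with self-loops this $P$ is feasible for~\eqref{opt:Stack_tauvec} (all edges present, $P$ row-stochastic, and $\pivec$ is indeed its stationary distribution since $\pivec^\top \1_n\pivec^\top = \pivec^\top$), so the formula genuinely gives the game value for this strategy.

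I do not expect a serious obstacle here; the only thing to be careful about is the geometric-series / independence argument, i.e.\ correctly justifying that $\{X_k\}_{k\ge1}$ is i.i.d.\ with law $\pivec$ under $P = \1_n\pivec^\top$ and hence that hitting times are geometric and $i$-independent. One could alternatively avoid the probabilistic language entirely and verify by induction that $F_k = \1_n\,\pivec^\top\,(\mathrm{diag}(\1_n - \pivec))^{k-1}\,\mathrm{diag}(\pivec)$ — wait, more cleanly, that row $i$ of $F_k$ equals $(1-\pi_j)^{k-1}\pi_j$ in column $j$ — directly from the recursion~\eqref{const:recursion}, using $\textup{diag}(F_k)$ picking out the $j$-th diagonal entry $(1-\pi_j)^{k-1}\pi_j$; this is a short computation and sidesteps any measure-theoretic bookkeeping.
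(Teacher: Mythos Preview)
Your proposal is correct and matches the paper's proof in substance: the paper directly verifies via the recursion~\eqref{eq:F_k} that $F_k\e_j = \pi_j(1-\pi_j)^{k-1}\1_n$, sums the geometric series, and takes the minimum, which is precisely your alternative route and equivalent to your primary probabilistic argument that $\{X_k\}_{k\ge 1}$ is i.i.d.\ with law $\pivec$ so that $T_{ij}$ is geometric with parameter $\pi_j$. The only difference is presentation (probabilistic versus algebraic), not method.
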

\begin{proof}
The $i$-th column of the first hitting time probability matrix $F_k \e_i$ corresponding to the strategy $P = \1_n\pivec^\top$ can be calculated using the recursion~\eqref{eq:F_k}:
\begin{equation}
    F_k \e_i = \Bigl[ \pi_i (1-\pi_i)^{k-1} \Bigr] \1_n.
\end{equation}
Then the sum of the $i$-th columns of the $F_k$ matrices for $k \in \{1, \ldots, \tau_i\}$ can be calculated:
\begin{equation}
\begin{aligned}
    \sum\nolimits_{k=1}^{\tau_i} F_k \e_i &= \Bigl[ \pi_i \sum\nolimits_{k=1}^{\tau_i} (1 {-} \pi_i)^{k-1} \Bigr] \1_n \\
    &= \bigl[ 1 - (1 {-} \pi_i)^{\tau_i} \bigr] \1_n.    
\end{aligned}
\end{equation}
Therefore, the capture probability is given by
\begin{equation}
    \mu = \min_{1\leq i \leq n} \bigl[ 1 - (1-\pi_i)^{\tau_i} \bigr]. \qedhere
\end{equation}
\end{proof}
Using Lemma~\ref{lem:pi_strat}, we can write an optimization problem to maximize the capture probability via the choice of the stationary distribution $\pivec$:
\begin{equation} \label{opt:pi_opt}
\begin{aligned}
    \max_{\pivec \in \real^n} \quad & \min_{1\leq i \leq n} -(1-\pi_i)^{\tau_i} \\
    \text{s.t.} \quad & \pivec \geq \0_n, \\
    & \pivec^\top \1_n = 1.
\end{aligned}
\end{equation}
Note that~\eqref{opt:pi_opt} can be viewed as a resource allocation problem with continuous decision variables $\pi_i$ and a separable, strictly increasing, and invertible objective function $f_i(\pi_i) := -(1-\pi_i)^{\tau_i}$. As a result,~\eqref{opt:pi_opt} can be reduced to a single nonlinear equation~\cite{JRB:79}:
\begin{equation} \label{eq:v_opt_calc}
    \sum\nolimits_{i=1}^n f^{-1}_i(v_\textup{opt}) = \sum\nolimits_{i=1}^n \bigl( 1 - (-v_\textup{opt})^{1/\tau_i} \bigr) = 1,
\end{equation}
where $v_\textup{opt} \in (-1,0)$ is the optimal value of~\eqref{opt:pi_opt}. A unique solution to Eq.~\eqref{eq:v_opt_calc} exists for all $\tauvec \in \natural^n$ because the function $g(v_\textup{opt}) := \sum_{i=1}^n (-v_\textup{opt})^{1/\tau_i}$ has a range of $[0,n]$ and is strictly increasing. Eq.~\eqref{eq:v_opt_calc} can be solved efficiently for $v_\textup{opt}$ via bisection. Then the corresponding choice of decision variables $\pivec_\textup{opt}$ can be found via
\begin{equation}\label{eq:v_opt}
    -v_\textup{opt} = (1-\pi_{\textup{opt},1})^{\tau_1} = \cdots = (1-\pi_{\textup{opt},n})^{\tau_n}.
\end{equation}

We can get a lower bound on the suboptimality of the strategy $P_\textup{opt} = \1_n\pivec^\top_\textup{opt}$ by considering the ratio of its capture probability $\mu_\textup{opt}$ to the upper bound for the optimal capture probability $\mu^*$ given by Theorem~\ref{thm:up_bound}:
\begin{equation}
    \frac{\mu_\textup{opt}}{\mu^*} \geq \frac{1+v_\textup{opt}}{\min_{1\leq i \leq n} \pi_i^* \tau_i} \geq 
    \frac{1+v_\textup{opt}}{\min \{ 1, \frac{\tau_\textup{max}}{n} \}},
\end{equation}
where the second inequality comes from the fact that $\mu^*$ is a probability and $\pi^*_\textup{min} \leq 1/n$.

\subsection{Complete Bipartite Graphs} \label{sect:opt_bipart}
In this section, we present a method for computing an effective patrol strategy with bounded suboptimality for complete bipartite graphs. Denote a complete bipartite graph by $\mathcal{G} = (\mathcal{P},\mathcal{Q},\mathcal{E})$ where $\mathcal{P},\mathcal{Q}$ are two sets of disjoint and independent nodes and $\lvert \mathcal{P} \rvert = n_p, \lvert \mathcal{Q} \rvert = n_q$. Motivated by the complete graph results, we consider strategies of the form:
\begin{equation} \label{eq:bipart_strat}
    P = 
    \begin{bmatrix}
        \0_{n_p \times n_p} & \1_{n_p} \qvec^\top \\[0.1cm]
        \1_{n_q} \pvec^\top & \0_{n_q \times n_q}
    \end{bmatrix},
\end{equation}
where $\pvec \in \real^{n_p},\qvec \in \real^{n_q}$ are the transition probabilities to the nodes in $\mathcal{P},\mathcal{Q}$, respectively. The following lemma gives the capture probability for these strategies.
\begin{lemma}[Capture Probability for Complete Bipartite Graph Strategy] \label{lem:bipart_strat}
Given a complete bipartite digraph $\mathcal{G} = (\mathcal{P},\mathcal{Q},\mathcal{E})$ with $\lvert \mathcal{P} \rvert = n_p,\lvert \mathcal{Q} \rvert = n_q, n_p + n_q = n$ and vectors $\tauvec^p \in \natural^{n_p}_{\geq2}, \tauvec^q \in \natural^{n_q}_{\geq2}$ of attack durations corresponding to each node in $\mathcal{P},\mathcal{Q}$, the capture probability for strategies of the form of Eq.~\eqref{eq:bipart_strat} is
\begin{equation} \label{eq:bipart_mu}
\begin{aligned}
    \mu = \min \Bigl\{ & 1{-}(1{-}p_1)^{\lfloor \tau_1^p/2 \rfloor}, \ldots, 1{-}(1{-}p_{n_p})^{\lfloor \tau_{n_p}^p/2 \rfloor}, \\
    & 1{-}(1{-}q_1)^{\lfloor \tau_1^q/2 \rfloor}, \ldots, 1{-}(1{-}q_{n_q})^{\lfloor \tau_{n_q}^q/2 \rfloor} \Bigr\},
\end{aligned}
\end{equation}
where $\pvec = \begin{bmatrix}
        p_1 & \cdots & p_{n_p} 
    \end{bmatrix}^\top$ and $\qvec = \begin{bmatrix}
        q_1 & \cdots & q_{n_q} 
    \end{bmatrix}^\top$.
\end{lemma}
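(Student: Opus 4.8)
The plan is to follow the template of the proof of Lemma~\ref{lem:pi_strat}: compute the column sums $\sum_{k=1}^{\tau_j}F_k\e_j$ explicitly for the strategy in~\eqref{eq:bipart_strat}, and then read off the capture probability from constraint~\eqref{const:mu} as the minimum entry of the resulting matrix. The key structural observation is that the block anti-diagonal form of $P$ forces the Markov chain to alternate between $\mathcal{P}$ and $\mathcal{Q}$; consequently, starting from a node $i$, the first hitting time to a node $j$ is always even if $i$ and $j$ lie in the same part, and always odd if they lie in different parts. This parity phenomenon is what produces the floor functions in~\eqref{eq:bipart_mu} and is the main new feature relative to the complete-graph case, where every column sum was constant across rows.

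Concretely, fix $j\in\mathcal{P}$ and let $p_j$ be the corresponding entry of $\pvec$. I would prove by induction on $k$, using the recursion~\eqref{eq:F_k} together with $\pvec^\top\1_{n_p}=\qvec^\top\1_{n_q}=1$, that
\[
F_k\e_j=
\begin{cases}
p_j(1-p_j)^{(k-1)/2}\begin{bmatrix}\0_{n_p}\\ \1_{n_q}\end{bmatrix}, & k\text{ odd},\\[2mm]
p_j(1-p_j)^{k/2-1}\begin{bmatrix}\1_{n_p}\\ \0_{n_q}\end{bmatrix}, & k\text{ even}.
\end{cases}
\]
The base case $F_1=P$ is immediate, and in the inductive step the subtraction of $\textup{diag}(F_k)$ in~\eqref{eq:F_k} only matters on the steps where $F_k\e_j$ has nonzero $j$-th entry, i.e.\ for even $k$ since $j\in\mathcal{P}$; there the term $-F_k(j,j)\e_j$ contributes exactly the extra factor $(1-p_j)$ after premultiplying by $P$. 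Summing over $k\in\{1,\dots,\tau_j^p\}$, the $\mathcal{Q}$-rows of $\sum_k F_k\e_j$ equal $1-(1-p_j)^{\lceil\tau_j^p/2\rceil}$ (the odd $k$ contribute a geometric series) while the $\mathcal{P}$-rows equal $1-(1-p_j)^{\lfloor\tau_j^p/2\rfloor}$ (the even $k$ contribute); here the hypothesis $\tau_j^p\geq 2$ guarantees at least one even index in the range. Since $t\mapsto 1-(1-p_j)^t$ is nondecreasing and $\lfloor\cdot\rfloor\le\lceil\cdot\rceil$, the minimum entry of column $j$ is $1-(1-p_j)^{\lfloor\tau_j^p/2\rfloor}$.

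By exchanging the roles of $(\mathcal{P},\pvec,\tauvec^p)$ and $(\mathcal{Q},\qvec,\tauvec^q)$ — the strategy~\eqref{eq:bipart_strat} is symmetric under this swap — the same computation for $j\in\mathcal{Q}$ gives minimum column entry $1-(1-q_j)^{\lfloor\tau_j^q/2\rfloor}$. Taking the minimum over all columns and all rows of $\sum_{j}\sum_{k=1}^{\tau_j}F_k\e_j\e_j^\top$ then yields the largest $\mu$ admissible in~\eqref{const:mu}, which is precisely~\eqref{eq:bipart_mu}. I expect the only delicate points to be the parity bookkeeping in the induction — getting the exponents and the floor/ceiling alignment correct — and confirming that the off-parity ($\mathcal{Q}$-)rows never lower the minimum below the stated value; the remainder is routine geometric-series summation exactly as in Lemma~\ref{lem:pi_strat}.
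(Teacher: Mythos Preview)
Your proposal is correct and follows essentially the same approach as the paper: use the recursion~\eqref{eq:F_k} and the block anti-diagonal structure of~\eqref{eq:bipart_strat} to identify the parity-alternating form of the columns $F_k\e_j$, sum the resulting geometric series, and take the minimum. If anything, your version is more explicit than the paper's---you compute the full column vector by induction and separate the $\mathcal{P}$-rows from the $\mathcal{Q}$-rows, whereas the paper's proof tracks only the minimum entry of each column and is somewhat terse about how the floor arises.
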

\begin{proof}
Using the recursion~\eqref{eq:F_k}, an equation for the minimum entry of the $i$-th column of $F_k$ can be written for $k \geq 2$:
\begin{equation} \label{eq:min_F_k_bipart}
    \min F_k \e_i  = p_i (1 - p_i)^{\lfloor k/2 \rfloor  -1},
\end{equation}
assuming that node $i \in \mathcal{P}$. The same equation can be written for a node $j \in \mathcal{Q}$. Note that the minimum value of $F_k \e_i$ only increases for successive even values of $k$. Using Eq.~\eqref{eq:min_F_k_bipart}, the sum of the $i$-th columns of the $F_k$ matrices for $k \in \{2, \ldots, \tau_i^p\}$ can be calculated:
\begin{equation}
\begin{aligned}
    \sum\nolimits_{k=2}^{\tau_i^p} F_k \e_i &= p_i \sum\nolimits_{k=2}^{\tau_i^p} (1{-}p_i)^{\lfloor k/2 \rfloor -1} \\ &= 1 - (1{-}p_i)^{\lfloor \tau_i^p/2 \rfloor}.
\end{aligned}
\end{equation}
The capture probability is given by the minimum of $n$ terms like this, one for each column of $P$, as shown in~\eqref{eq:bipart_mu}. \qedhere
\end{proof}
Using Lemma~\ref{lem:bipart_strat}, we can write an optimization problem to identify the optimal $\pvec^*$:
\begin{equation} \label{opt:p_opt}
    \begin{aligned}
        \max_{\pvec \in \real^{n_p}} \quad & \min_{i \in \{1,\ldots,n_p\}} -(1-p_i)^{\lfloor \tau_i^p/2 \rfloor} \\
        \text{s.t.} \quad & \pvec \geq \0_{n_p}, \\
        & \pvec^\top \1_{n_p} = 1,
    \end{aligned}   
\end{equation}
and an identical formulation for $\qvec^*$. The optimal $\pvec^*$ can then be found via solution of the following equations by the argument presented in Section~\ref{sect:opt_complete}:
\begin{equation} \label{eq:v_opt_bipart}
    \begin{aligned}
        & \sum\nolimits_{i=1}^{n_p} \Bigl[ 1 - \bigl(-v_\textup{opt}^p\bigr)^{1/\lfloor \tau_i^p/2 \rfloor} \Bigr] = 1, \\
        & -v_\textup{opt}^p = (1-p_1^*)^{\lfloor \tau_1^p/2 \rfloor} = \cdots = (1-p_{n_p}^*)^{\lfloor \tau_{n_p}^p/2 \rfloor}.        
    \end{aligned}
\end{equation}
The same equations hold for $\qvec^*$. The stationary distribution $\pivec$ of this optimized strategy is given by
\begin{equation}
    \pivec = \begin{bmatrix}
        \frac{q_1^*}{2} & \cdots & \frac{q_{n_q}^*}{2} & \frac{p_1^*}{2} & \cdots & \frac{p_{n_p}^*}{2}
    \end{bmatrix}^\top.
\end{equation}

We can get a lower bound on the suboptimality of this optimized strategy by considering the ratio of its capture probability $\mu_\textup{opt}$ to the upper bound for the optimal capture probability $\mu^*$ given by Theorem~\ref{thm:up_bound}:
\begin{equation}
    \frac{\mu_\textup{opt}}{\mu^*} \geq
    \frac{1 + \min \{v_\textup{opt}^p,v_\textup{opt}^q \}}{\min \{ 1, \frac{\tau_\textup{max}}{n} \}}.
\end{equation}

\subsection{Star Graphs}
In this section, we prove that the strategy outlined in the previous section for complete bipartite graphs is optimal in the special case of star graphs. First, we remark that Lemma 10, Lemma 11, part (i) of Lemma 9, and the first part of the proof of Theorem 12 in \cite{XD-DP-FB:19b} continue to hold in the heterogeneous attack duration setting (simply replace $\tau$ in their proofs with the entry of $\tauvec$ corresponding to the attacker's node). Combining these results with those of the previous section yields the following theorem.
\begin{theorem}[Optimal Strategy for Star Graphs]
    Given a star digraph $\mathcal{G} = (\mathcal{P},\mathcal{Q},\mathcal{E})$ with $\lvert \mathcal{P} \rvert = 1, \lvert \mathcal{Q} \rvert = n-1$ and attack durations $\tauvec \in \natural^n_{\geq 2}$ corresponding to each node in $\mathcal{P},\mathcal{Q}$, the optimal strategy with capture probability $\mu^* = 1 + v_\textup{opt}^p$ for the surveillance agent is
    \begin{equation}
        P^* = \begin{bmatrix}
            0 & q_2^* & \cdots & q_{n}^* \\
            1 & 0 & \cdots & 0 \\
            \vdots & \vdots & \cdots & \vdots \\
            1 & 0 & \cdots & 0
            \end{bmatrix},
    \end{equation}
    where $q_2^*,\ldots,q_{n}^*$ are found via the solution of
    \begin{equation} \label{eq:star_opt}
    \begin{aligned}
        & \sum\nolimits_{i=2}^{n} \Bigl[ 1 - \bigl(-v_\textup{opt}^p\bigr)^{1/\lfloor \tau_i/2 \rfloor} \Bigr] = 1, \\
        & -v_\textup{opt}^p = (1-q_2^*)^{\lfloor \tau_2/2 \rfloor} = \cdots = (1-q_n^*)^{\lfloor \tau_n/2 \rfloor}.        
    \end{aligned}
\
    \end{equation}
\end{theorem}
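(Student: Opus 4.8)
The plan is to treat a star graph as the $n_p=1$ specialization of a complete bipartite graph, so that the feasible strategy set collapses almost to a point and the analysis of Section~\ref{sect:opt_bipart} supplies everything else. The key preliminary observation is structural: in a star digraph no node carries a self-loop and each leaf has exactly one outgoing edge (to the center), so \emph{every} $P$ feasible for~\eqref{opt:Stack_tauvec} is forced to have $p_{i1}=1$ for each leaf $i$, the only freedom being the center's row $(0,q_2,\dots,q_n)$ with $\qvec\ge\0_{n-1}$ and $\1_{n-1}^\top\qvec=1$; equivalently, every feasible $P$ already has the form~\eqref{eq:bipart_strat} with $n_p=1$ and the forced choice $p_1=1$, so on a star the restriction made in Section~\ref{sect:opt_bipart} is no restriction at all. (Part~(i) of Lemma~9, Lemmas~10 and~11, and the first portion of the proof of Theorem~12 of~\cite{XD-DP-FB:19b} yield the same conclusion and additionally cover any convention that permits a self-loop at the center; as noted in the remark preceding the theorem, their proofs go through verbatim once $\tau$ is replaced by $\tau_j$ for the attacked node $j$, since the hitting-time recursion~\eqref{eq:F_k} does not involve the attack durations.)

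Given that reduction, the proof is a direct application of Lemma~\ref{lem:bipart_strat}. With $n_p=1$ and $p_1=1$, the lone ``$\mathcal P$-term'' $1-(1-p_1)^{\lfloor\tau_1/2\rfloor}$ equals $1$ (here $\tau_1\ge2$ makes $\lfloor\tau_1/2\rfloor\ge1$), so it is never binding, and the capture probability of any feasible strategy equals $\min_{i=2,\dots,n}\bigl[1-(1-q_i)^{\lfloor\tau_i/2\rfloor}\bigr]$. Maximizing this over the simplex is exactly the separable resource-allocation problem~\eqref{opt:p_opt} re-indexed to the leaves $2,\dots,n$ with durations $\tauvec$; by the argument of Section~\ref{sect:opt_complete} (via~\cite{JRB:79}) its optimizer $\qvec^*$ is characterized by the two displays of~\eqref{eq:star_opt}, the optimal inner value is $v_\textup{opt}^p\in(-1,0)$, and hence $\mu^*=1-(-v_\textup{opt}^p)=1+v_\textup{opt}^p$ while the center row of $P^*$ is $(0,q_2^*,\dots,q_n^*)$ --- the matrix in the statement. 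A one-line fixed-point computation then confirms the stationary distribution of $P^*$ is $\pivec=\bigl(\tfrac12,\tfrac{q_2^*}{2},\dots,\tfrac{q_n^*}{2}\bigr)$, matching the general bipartite formula up to re-indexing.

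I expect the only genuine obstacle to be the global optimality claim, i.e.\ ruling out that some strategy \emph{not} already forced into the form~\eqref{eq:bipart_strat} could do better. On a self-loop-free star this is immediate from the edge set, as above; if one instead allows a self-loop at the center one must argue it never helps --- any probability mass on it only delays the agent's returns to the leaves, so transferring that mass onto a leaf weakly increases every leaf's capture probability while leaving the center's at $1$ --- which is precisely what the proof of Theorem~12 of~\cite{XD-DP-FB:19b} supplies. The rest is bookkeeping: checking the nondegeneracy conditions preceding the theorem (condition~(1) requires $\tau_i\ge2$ at the leaves, which is the hypothesis $\tauvec\in\natural^n_{\ge2}$; condition~(2) is vacuous since a star with $n\ge3$ has no Hamiltonian cycle, the $n=2$ case being the trivial deterministic two-cycle), verifying that $g(v):=\sum_{i=2}^n(-v)^{1/\lfloor\tau_i/2\rfloor}$ is continuous, strictly monotone, and sweeps $[0,n-1]$ as $v$ sweeps $[-1,0]$ so that~\eqref{eq:star_opt}, i.e.\ $g(v)=n-2$, has a unique root $v_\textup{opt}^p\in(-1,0)$ whenever $n\ge3$, and identifying the label $v_\textup{opt}^p$ with the value produced by the $\qvec$-optimization of Section~\ref{sect:opt_bipart}.
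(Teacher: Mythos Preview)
Your proposal is correct and follows essentially the same approach as the paper: both reduce to the fixed structural form via the lemmas of~\cite{XD-DP-FB:19b} (Lemma~9(i), 10, 11, and the opening of Theorem~12's proof), observe that the center is never attacked since $\tau_1\ge2$ forces capture probability~$1$ there, compute the leaf capture probabilities as $1-(1-q_j)^{\lfloor\tau_j/2\rfloor}$, and solve the resulting separable max--min via the resource-allocation argument of Section~\ref{sect:opt_complete}. The only cosmetic difference is that you invoke Lemma~\ref{lem:bipart_strat} directly to read off the capture-probability formula, whereas the paper redoes the hitting-time recursion for the star case; your observation that the bipartite form is forced by the edge set (absent a center self-loop) is a mild streamlining over citing Lemma~11 for the leaf rows.
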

\begin{proof}
\begin{TAC}
    The proof resembles that of Theorem 12 in \cite{XD-DP-FB:19b} and is available in the arXiv version of this brief~\cite{YJ-GDG-XD-JRM-FB:23}. \qedhere
\end{TAC}

\begin{arXiv}
    Rows 2 to $n$ of $P^*$ follow directly from Lemma 11 of \cite{XD-DP-FB:19b}. The zero in the $(1,1)$ entry of $P^*$ prevents a self-loop at the center node per the first part of the proof of Theorem 12 in \cite{XD-DP-FB:19b}. What remains is to show that the optimal choice of $q_2^*,\ldots,q_n^*$ boils down to~\eqref{opt:p_opt} where $i \in \{2,\ldots,n\}$.

    First, notice that the intruder never attacks the center node because $\tau_1 \ge 2$ so the capture probability for the surveillance agent following $P^*$ is 1 at the center node. Then, following the proof of Theorem 12 in \cite{XD-DP-FB:19b}, notice that for all $\tau_j \in \natural_{\geq 2}$ and $j \in \{2,\ldots,n\}$
    \begin{equation}
    \begin{split}
        \Prob(T_{1j} \le \tau_j) & = q_j + \sum_{k\notin \{1,j\}} q_k \Prob(T_{kj} \le \tau_j - 1) \\
        & = q_j + \sum_{k\notin \{1,j\}} q_k \Prob(T_{1j} \le \tau_j - 2) \\
        & = q_j + (1 - q_j) \Prob(T_{1j} \le \tau_j - 2)
    \end{split}
    \end{equation}
    where $\Prob(T_{1j} \le 1) = \Prob(T_{1j} \le 2) = q_j$. Therefore, the capture probability of this strategy when the intruder attacks the $j$-th node is
    \begin{equation}
        \Prob(T_{1j} \le \tau_j) = 1 - (1 - q_j)^{\lceil \tau_j/2 \rceil}.
    \end{equation}
    The intruder chooses when and where to attack based on 
    \begin{equation}
    \begin{split}
        \min_{i,j \in \{2,\ldots,n\}} & \Prob(T_{ij} \le \tau_j) = \min_{i,j \in \{2,\ldots,n\}} \Prob(T_{1j} \le \tau_j - 1) \\ 
        & = \min_{j \in \{2,\ldots,n\}} \Bigl[ 1 - (1 - q_j)^{\lceil (\tau_j-1)/2 \rceil} \Bigr].
    \end{split}
    \end{equation}
    Note that $\lceil (\tau_j-1)/2 \rceil = \lfloor \tau_j/2 \rfloor \in \natural$ because $\tau_j \in \natural_{\geq 2}$ for all $j$. This implies that the optimal strategy for the surveillance agent can be found via
    \begin{equation}
    \begin{aligned}
        \max_{\qvec} \quad & \min_{j \in \{2,\ldots,n\}} -(1-q_j)^{\lfloor \tau_j/2 \rfloor} \\
        \text{s.t.} \quad & q_j \geq 0, \quad \forall j\in\{2,\ldots,n\},\\
        & \qvec^\top \1_{n-1} = 1
    \end{aligned}
    \end{equation}
    which can be solved for $\qvec^*$ using Eq.~\eqref{eq:star_opt} by the argument presented in Section \ref{sect:opt_bipart}.
\end{arXiv}
\end{proof}

The stationary distribution $\pivec^*$ of the optimal strategy $P^*$ is given by
\begin{equation}
    \pivec^* = \begin{bmatrix}
        \frac{1}{2} & \frac{q_{2}^*}{2} & \cdots & \frac{q_{n}^*}{2}
    \end{bmatrix}^\top.
\end{equation}

\section{Optimal Defense Placement}
Interpreting $\tau_i$ as the strength of the defenses at node $i$, we now consider the related problem of optimally allocating a defense budget $B \in \natural$ among the nodes such that $\tauvec^\top \1_n = B$. We leverage the previous results on optimized patrol strategies and design complementary defense placement rules. 

\subsection{Complete Graphs} \label{sect:odp_complete}
In this section, we present the optimal defense placement rule for complete graphs assuming a surveillance strategy of the form $P = \1_n\pivec^\top$. Lemma~\ref{lem:pi_strat} gives the capture probability for these strategies. We can write an optimization problem to maximize the capture probability over the integer attack durations $\tauvec$ with the defense budget constraint:
\begin{equation} \label{eq:def_opt}
\begin{aligned}
    \max_{\tauvec \in \natural^n} \quad & \min_{1\leq i \leq n} -(1-\pi_i)^{\tau_i} \\
    \text{s.t.} \quad & \tauvec \geq \1_n, \\
    \quad & \tauvec^\top \1_n = B.
\end{aligned}
\end{equation}
Note that $n < B < n^2$ for a nontrivial allocation problem. From Section \ref{sect:opt_complete}, we know that the optimal choice of $\pivec$ is given by Eqs.~(\ref{eq:v_opt_calc},\ref{eq:v_opt}). Therefore, we can reformulate~\eqref{eq:def_opt}:
\begin{subequations} \label{eq:def_opt_reform}
\begin{align}
    \min_{w \in \real,\tauvec \in \natural^n} \quad & w \\
    \text{s.t.} \quad & \sum\nolimits_{i=1}^n w^{1/\tau_i} = n-1, \label{const:nonl} \\
    & \tauvec \geq \1_n, \\
    & \tauvec^\top \1_n = B,
\end{align}
\end{subequations}
where we have defined $w := -v_\textup{opt} \in (0,1)$. We arrive at a Mixed-Integer Nonlinear Program (MINLP). Due to the structure of problem~\eqref{eq:def_opt_reform}, we will be able to write down the optimal solution explicitly. First, we will need the lower bound on the optimal value of $w$ given in the following lemma.
\begin{lemma}[Lower Bound for Objective Function Value] \label{lem:w_bound}
    Given a complete digraph $\mathcal{G} = (\mathcal{V},\mathcal{E})$ with $|\mathcal{V}|=n$ and a defense budget $B \in \natural \cap (n,n^2)$, we have the following lower bound for the optimal objective function value of problem~\eqref{eq:def_opt_reform}:
    \begin{equation}
        w^* > \me^{-2}.
    \end{equation}
\end{lemma}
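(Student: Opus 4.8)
The plan is to reduce the claim to a one-dimensional convexity argument. Recall that, for each feasible allocation $\tauvec$, the associated objective value $w(\tauvec)\in(0,1)$ is the unique root of $\sum_{i=1}^{n} w^{1/\tau_i}=n-1$ (see~\eqref{const:nonl}), and that $w\mapsto\sum_{i=1}^{n}w^{1/\tau_i}$ is continuous and strictly increasing on $(0,1)$. Consequently $w(\tauvec)>\me^{-2}$ holds if and only if $\sum_{i=1}^{n}\me^{-2/\tau_i}<n-1$, i.e.\ if and only if $\sum_{i=1}^{n}\bigl(1-\me^{-2/\tau_i}\bigr)>1$. Since the feasible set of~\eqref{eq:def_opt_reform} is finite, it suffices to prove this last inequality for every feasible $\tauvec$.

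Next I would introduce $\psi(\tau):=1-\me^{-2/\tau}$ and check that $\psi$ is convex on $[1,\infty)$. A direct differentiation gives $\psi''(\tau)=4\me^{-2/\tau}(\tau-1)/\tau^{4}\ge 0$ for $\tau\ge 1$; note that $\tau=1$ is exactly the inflection point of $\psi$, which is the structural reason the constant $\me^{-2}$ appears. Because every feasible allocation satisfies $\tau_i\ge 1$ and $\tfrac1n\sum_{i=1}^{n}\tau_i=B/n\ge 1$, Jensen's inequality applied on $[1,\infty)$ yields
\[
\sum_{i=1}^{n}\bigl(1-\me^{-2/\tau_i}\bigr)=\sum_{i=1}^{n}\psi(\tau_i)\ \ge\ n\,\psi\!\left(\tfrac{B}{n}\right)=n\bigl(1-\me^{-2n/B}\bigr).
\]

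It then remains to verify $n\bigl(1-\me^{-2n/B}\bigr)>1$, equivalently $\me^{-2n/B}<1-\tfrac1n$, equivalently $\tfrac{2n}{B}>\ln\tfrac{n}{n-1}$. For this I would combine two elementary facts: from $B<n^{2}$ we get $\tfrac{2n}{B}>\tfrac2n$, while $\ln\tfrac{n}{n-1}=\ln\!\bigl(1+\tfrac1{n-1}\bigr)<\tfrac1{n-1}\le\tfrac2n$ for all $n\ge 2$ (the last inequality being equivalent to $2\le n$). Chaining these, $\tfrac{2n}{B}>\tfrac2n\ge\tfrac1{n-1}>\ln\tfrac{n}{n-1}$, as required. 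Hence $\sum_{i=1}^{n}\bigl(1-\me^{-2/\tau_i}\bigr)\ge n\bigl(1-\me^{-2n/B}\bigr)>1$ for every feasible $\tauvec$, and minimizing over the finite feasible set gives $w^{*}>\me^{-2}$.

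The only step that is not routine bookkeeping is establishing the convexity of $\psi$ on the correct domain: $\psi$ is concave near $0$, so Jensen would be useless without the floor $\tau_i\ge 1$, and it is precisely the exponent $2$ that moves the inflection point of $\psi$ to $\tau=1$ so that it coincides with that integrality floor. Once this is secured, the remaining inequalities are immediate, and a slightly sharper version of the final computation would even yield a lower bound on $w^*$ that varies with $n$ and $B$.
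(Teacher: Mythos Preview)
Your proof is correct and follows essentially the same route as the paper: both reduce the claim to showing $\sum_{i} \me^{-2/\tau_i} < n-1$ for every feasible $\tauvec$, use concavity of $\tau\mapsto \me^{-2/\tau}$ on $[1,\infty)$ to bound the sum by its value at the uniform allocation $\tau_i=B/n$, and finish with the same arithmetic $n\me^{-2n/B}<n\me^{-2/n}<n-1$. The only difference is cosmetic---you invoke Jensen's inequality directly, whereas the paper reaches the same extremal point via a Lagrangian on the relaxed feasible set.
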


\begin{proof}
    Assume $w^* \leq \me^{-2}$. We will show that this cannot satisfy constraint~\eqref{const:nonl} for any feasible choice of $\tauvec$. Note that the function $f(w,\tauvec) = \sum_{i=1}^n w^{1/\tau_i}$ defined by the LHS of constraint~\eqref{const:nonl} is increasing in $w$. Therefore, it suffices to show that $f(\me^{-2},\tauvec) < n-1$ for any feasible $\tauvec$. Now we can show that $f(\me^{-2},\tauvec)$ is increasing and concave w.r.t.\ $\tau_i$:
    \begin{equation} \label{eq:df_dtau}
    \begin{aligned} 
        \frac{\partial f(\me^{-2},\tauvec)}{\partial \tau_i} & = \frac{2\me^{-2/\tau_i}}{\tau_i^2} > 0, \\
        \frac{\partial^2 f(\me^{-2},\tauvec)}{\partial \tau_i^2} & = \frac{-4\me^{-2/\tau_i}(\tau_i - 1)}{\tau_i^4} \leq 0.
    \end{aligned}
    \end{equation}
    Therefore, the following optimization problem is convex:
    \begin{equation} \label{eq:lemma_opt_problem}
    \begin{aligned}
        \max_{\tauvec \in \real^n} \quad & f(\me^{-2},\tauvec) \\
        \text{s.t.} \quad & \tauvec \geq \1_n, \\
        & \tauvec^\top \1_n = B.
    \end{aligned}
    \end{equation}
    Notice that we have relaxed $\tauvec$ to be real-valued. The optimal value of~\eqref{eq:lemma_opt_problem} represents an upper bound on $f(\me^{-2},\tauvec)$ for feasible $\tauvec$. We form the Lagrangian of~\eqref{eq:lemma_opt_problem}:
    \begin{equation}
        \mathcal{L}(\tauvec,\lambda) = f(\me^{-2},\tauvec) + \lambda(\tauvec^\top \1_n - B),
    \end{equation}
    where we have again increased the size of the feasible region by omitting the inequality constraints. Differentiating w.r.t.\ $\tau_i$ yields the following condition for stationary points:
    \begin{equation} \label{eq:stat_pt}
        \frac{\me^{-2/\tau_i}}{\tau_i^2} = \frac{\me^{-2/\tau_j}}{\tau_j^2},
    \end{equation}
    for all $i,j$. The point $\tauvec^* = (B/n)\1_n$ satisfies~\eqref{eq:stat_pt} and the constraints and therefore must be a global maximum of problem~\eqref{eq:lemma_opt_problem}.
    Now we have an upper bound for $f(\me^{-2},\tauvec)$; namely, $f(\me^{-2},\tauvec) \leq f(\me^{-2},\tauvec^*) = n\me^{-2n/B}$. Because $B < n^2$, $n\me^{-2n/B} < n\me^{-2/n}$. It can be shown that $n\me^{-2/n} < n-1$ for $n \geq 2$ which is a contradiction.\qedhere
\end{proof}

This lemma provides an upper bound on the capture probability $\mu_{\textup{opt}} < 1-\me^{-2}$. Now we can state the main result.

\begin{theorem}[Optimal Defense Placement for Complete Graphs]
\label{thm:def_opt_reform}
    Given a complete digraph $\mathcal{G} = (\mathcal{V},\mathcal{E})$ with $|\mathcal{V}|=n$ and a defense budget $B \in \natural \cap (n,n^2)$, any permutation of
    \begin{equation} \label{eq:def_opt_soln}
    \begin{aligned}
        \tau_1 = \cdots = \tau_r &= \lceil B/n \rceil, \\
        \tau_{r+1} = \cdots = \tau_n &= \lfloor B/n \rfloor,
    \end{aligned}
    \end{equation}
    where $r = \textup{mod}(B,n)$ is a global minimum of problem~\eqref{eq:def_opt_reform}.
\end{theorem}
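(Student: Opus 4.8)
The plan is to reduce the mixed-integer program \eqref{eq:def_opt_reform} to a separable \emph{concave} maximization over integer points, for which a balanced allocation is optimal, with Lemma~\ref{lem:w_bound} supplying exactly the bound on $w$ that makes the concavity work. Write $\tauvec^\circ$ for the allocation in \eqref{eq:def_opt_soln}. First I would check that $\tauvec^\circ$ is feasible: since $B \in (n,n^2)$ we have $\lfloor B/n\rfloor \ge 1$, and $r\lceil B/n\rceil + (n-r)\lfloor B/n\rfloor = B$ with $r = \textup{mod}(B,n)$. Let $w^\circ \in (0,1)$ be the unique root of $\sum_{i=1}^n w^{1/\tau_i^\circ} = n-1$ (uniqueness as in the discussion following \eqref{eq:v_opt_calc}). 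Since $w^*$ is the optimal, i.e.\ minimal, value of \eqref{eq:def_opt_reform} and $\tauvec^\circ$ is feasible, Lemma~\ref{lem:w_bound} gives $w^\circ \ge w^* > \me^{-2}$.

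The structural heart of the proof is the claim that, for any fixed $w$ with $w > \me^{-2}$, the map $h_w(\tau) := w^{1/\tau} = \me^{-c/\tau}$ with $c := -\ln w \in (0,2)$ is \emph{strictly concave} on $[1,\infty)$. A direct computation gives
\begin{equation*}
    h_w''(\tau) = \me^{-c/\tau}\,\frac{c}{\tau^3}\Bigl(\frac{c}{\tau}-2\Bigr),
\end{equation*}
which is negative for every $\tau \ge 1$ because $c/\tau \le c < 2$. (This is the analogue of \eqref{eq:df_dtau}, where $c = 2$ sits exactly at the inflection point.) Applying this with $w = w^\circ$, I would then run a standard exchange/smoothing argument on integer allocations: if a feasible integer $\tauvec$ has components with $\tau_i \ge \tau_j + 2$, then strict concavity makes the forward difference $t \mapsto h_{w^\circ}(t+1)-h_{w^\circ}(t)$ strictly decreasing, so $h_{w^\circ}(\tau_i-1)+h_{w^\circ}(\tau_j+1) > h_{w^\circ}(\tau_i)+h_{w^\circ}(\tau_j)$; replacing $(\tau_i,\tau_j)$ by $(\tau_i-1,\tau_j+1)$ preserves $\sum_\ell \tau_\ell = B$ and $\tauvec\ge\1_n$ while strictly increasing $\sum_\ell h_{w^\circ}(\tau_\ell)$. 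Iterating, the only feasible integer allocations that cannot be strictly improved are those whose components differ pairwise by at most $1$, i.e.\ the permutations of $\tauvec^\circ$; hence $\sum_{i=1}^n (w^\circ)^{1/\tau_i} \le \sum_{i=1}^n (w^\circ)^{1/\tau_i^\circ} = n-1$ for every feasible $\tauvec$.

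To conclude, let $(w,\tauvec)$ be any feasible point of \eqref{eq:def_opt_reform}. Then $\sum_{i=1}^n (w^\circ)^{1/\tau_i} \le n-1 = \sum_{i=1}^n w^{1/\tau_i}$, and since $u \mapsto \sum_{i=1}^n u^{1/\tau_i}$ is strictly increasing on $(0,1)$ this forces $w \ge w^\circ$. Thus $(w^\circ,\tauvec^\circ)$ attains the minimum of \eqref{eq:def_opt_reform}, i.e.\ $\tauvec^\circ$ is a global minimum; and since the objective and every constraint are invariant under permutations of the coordinates of $\tauvec$, so is any permutation of $\tauvec^\circ$.

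I expect the main obstacle to be the concavity step, and specifically the observation that it genuinely depends on Lemma~\ref{lem:w_bound}: $h_w$ has an inflection at $\tau = c/2 = -\tfrac12\ln w$, so once $w \le \me^{-2}$ it is convex near $\tau = 1$ and the balanced allocation need not be optimal. Establishing $w^* > \me^{-2}$ first is what licenses the clean separable-concave argument. The remaining ingredients — feasibility of $\tauvec^\circ$, monotonicity in $w$, and the integer exchange argument — are routine.
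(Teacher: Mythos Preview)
Your proposal is correct and follows essentially the same route as the paper: both hinge on the strict concavity of $\tau\mapsto w^{1/\tau}$ on $[1,\infty)$ when $w>\me^{-2}$ (which is exactly what Lemma~\ref{lem:w_bound} supplies), and both use an integer exchange/``pairwise balancing'' argument to conclude that the balanced allocation is optimal. The only cosmetic difference is that the paper tracks $w$ along the balancing iterations and argues it decreases at each step, whereas you freeze $w=w^\circ$, show $\sum_i (w^\circ)^{1/\tau_i}\le n-1$ for every feasible $\tauvec$, and then invoke monotonicity in $w$ once at the end; your packaging is slightly more direct but the content is the same.
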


\begin{proof}
We begin with two comments. First, as discussed in Section~\ref{sect:opt_complete}, constraint~\eqref{const:nonl} uniquely defines $w$ as a function of $\tauvec$. 
Second, note that~\eqref{eq:def_opt_reform} is invariant with respect to permutation of the entries of $\tauvec$. 
Any permutation of~\eqref{eq:def_opt_soln} will achieve an identical value $w^*$ which we will show is optimal for problem~\eqref{eq:def_opt_reform}.

To prove optimality, we will present a simple algorithm that starts at any feasible point for $\tauvec$ and maintains feasibility while decreasing $w$ at each iteration. Regardless of the choice of initial point, the algorithm will converge to a permutation of~\eqref{eq:def_opt_soln}.
The algorithm consists entirely of identifying two values $\tau_i,\tau_j$ of the current allocation such that $\tau_i \geq \tau_j + 2$ and updating their values to $\tau_i-1,\tau_j+1$, respectively. Clearly, this ``pairwise balancing" algorithm will maintain feasibility and converge to a permutation of~\eqref{eq:def_opt_soln}. What remains is to show that $w$ decreases at each iteration. 

Consider the function $f(w,\tauvec) := \sum_{i=1}^n w^{1/\tau_i}$ defined by the LHS of constraint~\eqref{const:nonl}. It can be seen that $f$ is increasing in $w$. Therefore, choosing a feasible $\tauvec$ that maximizes $f$ leads to the minimum value of $w$ according to constraint~\eqref{const:nonl}. Now fix $\tau_k$ for all $k \in \{1,\dots,n\}\backslash\{i,j\}$, and define the reduced function $f_{\textup{red}}(w,\tau_i,\tau_j) := w^{1/\tau_i} + w^{1/\tau_j}$ where $\tau_i \geq \tau_j + 2$. By the same argument, choosing $\tau_i, \tau_j$ to increase $f_{\textup{red}}$ leads to a decrease in $w$. Therefore, we need to show that the pairwise balancing algorithm increases $f_{\textup{red}}$ at each iteration, i.e., $f_{\textup{red}}(w,\tau_i,\tau_j) < f_{\textup{red}}(w,\tau_i-1,\tau_j+1)$. This is equivalent to showing that 
\begin{equation}
    w^{1/\tau_i} - w^{1/{(\tau_i-1)}} < w^{1/{(\tau_j+1)}} - w^{1/\tau_j}.    
\end{equation}
Because $\tau_i - 1 > \tau_j$, this condition can be interpreted as a ``diminishing returns" property of the function $g(w, \tau) = w^{1/\tau}$ w.r.t.\ $\tau$. We can show that $g$ has this property by establishing that $g$ is concave in $\tau$. Performing the calculation, we have:
\begin{equation}
\begin{aligned}
    \frac{\partial g}{\partial \tau} & = \frac{-w^{1/\tau}\ln{w}}{\tau^2} > 0, \\
        \frac{\partial^2 g}{\partial \tau^2} & = \frac{w^{1/\tau}\ln{w}(\ln{w} + 2\tau)}{\tau^4} < 0.
\end{aligned}
\end{equation}
where we have used the lower bound on $w^*$ from Lemma~\ref{lem:w_bound} as a lower bound on $w$ to conclude that the term $\ln{w} + 2\tau > 0$ for $\tau \geq 1$. Therefore, we have shown that each iteration of the pairwise balancing algorithm leads to a decrease in $w$ which concludes the proof. \qedhere
\end{proof}

\subsection{Complete Bipartite Graphs}
In this section, we present the optimal defense placement rule for complete bipartite graphs assuming a surveillance strategy of the form of Eq.~\eqref{eq:bipart_strat}. Lemma~\ref{lem:bipart_strat} gives the capture probability for these strategies. As before, we write an optimization problem to maximize the capture probability via allocation of defenses while utilizing the optimized patrol strategy given by Eq.~\eqref{eq:v_opt_bipart}:
\begin{subequations} \label{eq:def_opt_bipart}
\begin{align}
    \min_{w_p \in \real,w_q \in \real,\tauvec \in \natural^n} \quad & \max \{ w_p,w_q  \} \\
    \text{s.t.} \quad & \sum\nolimits_{i=1}^{n_p} w_p^{1/ \lfloor \tau_i^p / 2 \rfloor} = n_p-1, \label{const:bipart_nonl1} \\
    \quad & \sum\nolimits_{j=1}^{n_q} w_q^{1/ \lfloor \tau_j^q / 2 \rfloor} = n_q-1, \label{const:bipart_nonl2} \\
    \quad & \1_{n_p}^\top \tauvec^p + \1_{n_q}^\top \tauvec^q = B.
\end{align}
\end{subequations}
Note that $2(n_p + n_q) < B < 2(n_p^2 + n_q^2)$ for a nontrivial problem. Because $w_p,w_q$ only decrease with successive even values for $\tau_i^q, \tau_j^p$, we can restrict our choice of $\tau_i^q, \tau_j^p$ to the even natural numbers ($B$ also assumed even). 

Now consider dividing the overall budget $B$ into two ``sub-budgets" $B_p,B_q$ such that $B_p,B_q$ are even and $\sum_{i=1}^{n_p} \tau_i^p = B_p, \sum_{j=1}^{n_q} \tau_j^q = B_q, B_p + B_q = B$. After the choice of $B_p,B_q$, \eqref{eq:def_opt_bipart} becomes decoupled into the max of two instances of a problem similar to the complete graph problem \eqref{eq:def_opt_reform}:
\begin{subequations} \label{eq:def_opt_bipart_reform}
\begin{align}
    & \min_{w_p \in \real,\tauvec^p \in \natural^{n_p}} \quad w_p \\
    \text{s.t.} & \quad \sum_{i=1}^{n_p} w_p^{2/\tau_i^p} = n_p - 1, \label{const:nonl_bipart} \\
    & \quad \1_{n_p}^\top \tauvec^p = B_p, \\
    & \quad \tauvec^p \in \{2,4,\ldots \}^{n_p}.
\end{align}
\end{subequations}
We will show that a uniform allocation rule similar to that presented in Theorem~\ref{thm:def_opt_reform} is optimal for problem~\eqref{eq:def_opt_bipart_reform}. 
\begin{theorem}[Optimal Defense Placement for Complete Bipartite Graphs]
\label{thm:def_opt_bipart}
    Given a complete bipartite digraph $\mathcal{G} = (\mathcal{P},\mathcal{Q},\mathcal{E})$ with $|\mathcal{P}| = n_p$ and an even-valued defense sub-budget $B_p \in (2n_p,2n_p^2)$, any permutation of
    \begin{equation} \label{eq:def_bipart_soln}
    \begin{aligned}
        \tau_{1}^p = \cdots = \tau_s^p & = \begin{cases}
            \lceil B_p/n_p \rceil {+} 1, & \lceil B_p/n_p \rceil \textup{ odd},\\
            \lceil B_p/n_p \rceil, & \lceil B_p/n_p \rceil \textup{ even},
        \end{cases}\\
        \tau_{s{+}1}^p {=} \cdots = \tau_{n_p}^p & = \begin{cases}
            \lfloor B_p/n_p \rfloor, & \lfloor B_p/n_p \rfloor \textup{ even},\\
            \lfloor B_p/n_p \rfloor {-} 1, & \lfloor B_p/n_p \rfloor \textup{ odd},
        \end{cases}
    \end{aligned}   
    \end{equation}
    where $s \in \natural$ is the solution of $s\tau_1^p + (n_p-s)\tau_{n_p}^p = B_p$ is a global minimum of problem~\eqref{eq:def_opt_bipart_reform}.
\end{theorem}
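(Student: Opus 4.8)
The plan is to recognize problem~\eqref{eq:def_opt_bipart_reform} as a rescaled instance of the complete-graph defense-placement problem~\eqref{eq:def_opt_reform} and then to apply Theorem~\ref{thm:def_opt_reform}. Since problem~\eqref{eq:def_opt_bipart_reform} already constrains $\tauvec^p \in \{2,4,\ldots\}^{n_p}$, I would introduce the change of variables $m_i := \tau_i^p/2 \in \natural$ for $i \in \{1,\ldots,n_p\}$. Under this substitution the budget constraint $\1_{n_p}^\top \tauvec^p = B_p$ becomes $\sum_{i=1}^{n_p} m_i = B_p/2$, the nonlinear constraint~\eqref{const:nonl_bipart} becomes $\sum_{i=1}^{n_p} w_p^{1/m_i} = n_p - 1$, and the objective $\min w_p$ is unchanged. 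Hence~\eqref{eq:def_opt_bipart_reform} is exactly problem~\eqref{eq:def_opt_reform} with the replacements $n \leftarrow n_p$, $B \leftarrow B_p/2$, and $\tauvec \leftarrow (m_1,\ldots,m_{n_p})$.

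The next step is to check that the hypotheses transfer: $B_p$ is even, so $B_p/2 \in \natural$, and $B_p \in (2n_p, 2n_p^2)$ gives $B_p/2 \in (n_p, n_p^2)$, so the pair $(n_p, B_p/2)$ meets the hypotheses of both Lemma~\ref{lem:w_bound} and Theorem~\ref{thm:def_opt_reform}. Applying Theorem~\ref{thm:def_opt_reform} to the reduced problem, a global minimizer in the $m_i$ variables is any permutation of the allocation with $r := \textup{mod}(B_p/2, n_p)$ entries equal to $\lceil B_p/(2n_p)\rceil$ and the remaining $n_p - r$ entries equal to $\lfloor B_p/(2n_p)\rfloor$. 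Transforming back via $\tau_i^p = 2m_i$ gives a global minimizer of~\eqref{eq:def_opt_bipart_reform} whose ``large'' nodes take the value $2\lceil B_p/(2n_p)\rceil$ and whose ``small'' nodes take the value $2\lfloor B_p/(2n_p)\rfloor$; permutation-invariance of~\eqref{eq:def_opt_bipart_reform} then yields the ``any permutation'' part of the claim.

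What remains is elementary bookkeeping: matching $2\lceil B_p/(2n_p)\rceil$ and $2\lfloor B_p/(2n_p)\rfloor$ to the piecewise expressions in~\eqref{eq:def_bipart_soln}, and verifying $s = r$. Writing $B_p = 2n_p q + t$ with $0 \le t < 2n_p$ ($t$ necessarily even) and splitting into the cases $t = 0$, $0 < t \le n_p$, and $n_p < t < 2n_p$, one sees that $2\lceil B_p/(2n_p)\rceil$ equals $\lceil B_p/n_p\rceil$ when $\lceil B_p/n_p\rceil$ is even and $\lceil B_p/n_p\rceil + 1$ when it is odd, and similarly $2\lfloor B_p/(2n_p)\rfloor$ equals $\lfloor B_p/n_p\rfloor$ when even and $\lfloor B_p/n_p\rfloor - 1$ when odd, which is precisely~\eqref{eq:def_bipart_soln} (the degenerate case $n_p \mid B_p/2$ collapsing to a uniform allocation). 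Finally, multiplying the defining identity $r\lceil B_p/(2n_p)\rceil + (n_p - r)\lfloor B_p/(2n_p)\rfloor = B_p/2$ by $2$ shows that $s := r$ is exactly the solution of $s\tau_1^p + (n_p - s)\tau_{n_p}^p = B_p$. The closest thing to an obstacle is this ceiling/floor parity translation, where an off-by-one is easy to introduce; I would pin it down with the explicit $B_p = 2n_p q + t$ decomposition rather than manipulating nested floor/ceiling expressions. Everything else is a direct transfer of Theorem~\ref{thm:def_opt_reform} (and, through it, of the pairwise-balancing argument and Lemma~\ref{lem:w_bound}).
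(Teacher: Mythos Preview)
Your reduction is correct. The substitution $m_i=\tau_i^p/2$ is a bijection between the feasible sets of~\eqref{eq:def_opt_bipart_reform} and~\eqref{eq:def_opt_reform} (with $n\leftarrow n_p$, $B\leftarrow B_p/2$) that preserves both the nonlinear constraint and the objective, so Theorem~\ref{thm:def_opt_reform} transfers verbatim; your parity case analysis and the identification $s=r=\textup{mod}(B_p/2,n_p)$ are also right.

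This is, however, not the route the paper takes. The paper re-runs the entire argument of Theorem~\ref{thm:def_opt_reform} and Lemma~\ref{lem:w_bound} from scratch in the bipartite variables: it introduces a pairwise-balancing algorithm with step size~$2$ (choose $\tau_i^p\ge\tau_j^p+4$, move to $\tau_i^p-2,\tau_j^p+2$), shows each step increases $f_{\textup{red}}(w_p,\tau_i^p,\tau_j^p)=w_p^{2/\tau_i^p}+w_p^{2/\tau_j^p}$ via concavity of $\tau\mapsto w_p^{2/\tau}$, and for that concavity re-establishes the bound $w_p>\me^{-2}$ by repeating the Lagrangian computation on the relaxed problem with maximum at $\tauvec^p=(B_p/n_p)\1_{n_p}$. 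Your approach is strictly more economical: the change of variables collapses all of that into a single invocation of the already-proven complete-graph result, so no concavity or lower-bound argument needs to be redone. The paper's approach buys self-containment (the bipartite theorem stands on its own), but at the cost of duplicating a page of analysis that your substitution shows is unnecessary. The only place your proof requires care beyond a one-line citation is the floor/ceiling bookkeeping matching $2\lceil B_p/(2n_p)\rceil$ and $2\lfloor B_p/(2n_p)\rfloor$ to the parity cases in~\eqref{eq:def_bipart_soln}, and the decomposition $B_p=2n_pq+t$ you propose handles that cleanly.
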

\begin{proof}
    \begin{TAC}
        The proof follows the same steps as that of Theorem~\ref{thm:def_opt_reform} and is included in the arXiv version of this brief~\cite{YJ-GDG-XD-JRM-FB:23}. \qedhere
    \end{TAC}
    
    \begin{arXiv}
        As before, constraint~\eqref{const:nonl_bipart} uniquely defines $w_p$ as a function of $\tauvec^p$. 
        Once again,~\eqref{eq:def_opt_bipart_reform} is also invariant with respect to permutation of the entries of $\tauvec^p$. Any permutation of~\eqref{eq:def_bipart_soln} will achieve an identical value $w_p^*$ which we will show is optimal for problem~\eqref{eq:def_opt_bipart_reform}.

        To prove optimality, we will present a simple algorithm that starts at any feasible point for $\tauvec^p$ and maintains feasibility while decreasing $w_p$ at each iteration. Regardless of the choice of initial point, the algorithm will converge to a permutation of~\eqref{eq:def_bipart_soln} and thereby prove optimality. The algorithm consists entirely of identifying two values $\tau_i^p,\tau_j^p$ of the current allocation such that $\tau_i \geq \tau_j + 4$ and updating their values to $\tau_i-2,\tau_j+2$, respectively. Clearly, this ``pairwise balancing" algorithm will maintain feasibility and converge to a permutation of~\eqref{eq:def_bipart_soln}. What remains is to show that $w_p$ decreases at each iteration. 
        
        Consider the function $f(w_p,\tauvec^p) := \sum_{i=1}^n w_p^{2/\tau_i^p}$ defined by the LHS of constraint~\eqref{const:nonl_bipart}. It can be seen that $f$ is increasing in $w_p$. Therefore, choosing a feasible $\tauvec^p$ that maximizes $f$ leads to the minimum value of $w_p$ according to constraint~\eqref{const:nonl_bipart}. Now fix $\tau_k^p$ for all $k \in \{1,\dots,n\}\backslash\{i,j\}$, and define the reduced function $f_{\textup{red}}(w_p,\tau_i^p,\tau_j^p) := w_p^{2/\tau_i^p} + w^{2/\tau_j^p}$ where $\tau_i^p \geq \tau_j^p + 4$. By the same argument, choosing $\tau_i^p, \tau_j^p$ to increase $f_{\textup{red}}$ leads to a decrease in $w_p$. Therefore, we need to show that the pairwise balancing algorithm increases $f_{\textup{red}}$ at each iteration, i.e., $f_{\textup{red}}(w_p,\tau_i^p,\tau_j^p) < f_{\textup{red}}(w_p,\tau_i^p-2,\tau_j^p+2)$. This is equivalent to showing that 
        \begin{equation}
            w^{2/\tau_i^p} - w^{2/{(\tau_i^p-2)}} < w^{2/{(\tau_j^p+2)}} - w^{2/\tau_j^p}.    
        \end{equation}
        Because $\tau_i^p - 2 > \tau_j^p$, this condition can be interpreted as a ``diminishing returns" property of the function $g(w_p,\tau^p) = w_p^{2/\tau^p}$. We can show that $g$ has this property by establishing that the derivative of $g$ w.r.t.\ $\tau^p$ is decreasing, i.e., that $g$ is concave in $\tau^p$. Performing the calculation, we arrive at:
        \begin{equation}
        \begin{aligned}
            \frac{\partial g}{\partial \tau^p} & = \frac{-2w_p^{2/\tau^p}\ln{w_p}}{(\tau^p)^2} > 0, \\
            \frac{\partial^2 g}{(\partial \tau^p)^2} & = \frac{4w_p^{2/\tau^p}\ln{w_p}(\ln{w_p} + \tau^p)}{(\tau^p)^4}.
        \end{aligned}
        \end{equation}
        To complete the proof, we need to show that $\ln{w_p}+\tau^p > 0$. For $\tau^p \geq 2$, this is equivalent to showing that $w_p > \me^{-2}$. 
        
        We proceed by contradiction. Assume $w_p \leq \me^{-2}$. We will show that this cannot satisfy constraint~\eqref{const:nonl_bipart} for any feasible choice of $\tauvec^p$. Because $f$ is increasing in $w_p$, it suffices to show that $f(\me^{-2},\tauvec^p) < n_p - 1$ for any feasible $\tauvec^p$. It is easy to see that $f(\me^{-2},\tauvec^p)$ is strictly increasing and concave w.r.t.\ $\tau^p$:
        \begin{equation}
        \begin{aligned}
            \frac{\partial f(\me^{-2},\tauvec^p)}{\partial \tau_i^p} & = \frac{4\me^{-4/\tau_i^p}}{(\tau_i^p)^2} > 0, \\
            \frac{\partial^2 f(\me^{-2},\tauvec^p)}{(\partial \tau_i^p)^2} & = \frac{-8\me^{-4/\tau_i^p}(\tau_i^p-2)}{(\tau_i^p)^4} \leq 0.
        \end{aligned}
        \end{equation}
        Therefore, the following optimization problem is convex:
        \begin{equation} \label{opt:lemma_bipart}
        \begin{aligned}
            \max_{\tauvec^p \in \real^{n_p}} \quad & f(\me^{-2},\tauvec^p) \\
            \text{s.t.} \quad & \tau_i^p \geq 2, \ \forall i \\
            & \1_{n_p}^\top \tauvec^p = B_p.
        \end{aligned}
        \end{equation}
        Notice that we have relaxed $\tauvec^p$ to be real-valued. The optimal value of~\eqref{opt:lemma_bipart} represents an upper bound on $f(\me^{-2},\tauvec^p)$ for feasible $\tauvec^p$. We form the Lagrangian of~\eqref{opt:lemma_bipart}:
        \begin{equation}
            \mathcal{L}(\tauvec^p,\lambda) = f(\me^{-2},\tauvec^p) + \lambda(\1_{n_p}^\top \tauvec^p - B_p),
        \end{equation}
        where we have again increased the size of the feasible region by omitting the inequality constraints. Differentiating w.r.t.\ $\tau_i^p$ yields the following condition for stationary points:
        \begin{equation} \label{eq:bipart_stat_pt}
            \frac{\me^{-4/\tau_i^p}}{(\tau_i^p)^2} = \frac{\me^{-4/\tau_j^p}}{(\tau_j^p)^2},
        \end{equation}
        for all $i,j$. The point $(\tauvec^p)^* = (B_p/n_p)\1_{n_p}$ satisfies~\eqref{eq:bipart_stat_pt} and the constraints and therefore must be a global maximum of problem~\eqref{opt:lemma_bipart}.
        Now we have an upper bound for $f(\me^{-2},\tauvec^p)$; namely, $f(\me^{-2},\tauvec^p) \leq f(\me^{-2},(\tauvec^p)^*) = n_p\me^{-4n_p/B_p}$. Because $B_p < 2n_p^2$, $n_p\me^{-4n_p/B_p} < n_p\me^{-2/n_p}$. It can be shown that $n_p\me^{-2/n_p} < n_p-1$ for $n_p \geq 2$ which is a contradiction.
    \end{arXiv}
\end{proof}

Given Theorem~\ref{thm:def_opt_bipart}, solving \eqref{eq:def_opt_bipart} boils down to appropriately dividing the overall budget $B$ into two even sub-budgets $B_p,B_q$. Note that $w_p$ is a decreasing functions of $B_p$ and $w_q$ is an increasing function of $B_p$ because $B_q = B - B_p$. Thus, we propose Algorithm \ref{alg:mod_bisection}, a modified bisection algorithm on $B_p$, to determine the optimal sub-budgets $B_p^*,B_q^*$.
\begin{algorithm}\small
\caption{Modified bisection}\label{alg:mod_bisection}
$LB \gets 2n_p,\ UB \gets B - 2n_q$\;
\While{$UB - LB > 2$}{
    $B_p \gets (LB + UB)/2$\;
    \lIf{$B_p\ \textup{odd}$}{$B_p \gets B_p + 1$}
    $B_q \gets B - B_p$\;
    Allocate $\tau_i^p,\tau_j^q$ using~\eqref{eq:def_bipart_soln}\;
    Solve (\ref{const:bipart_nonl1}),(\ref{const:bipart_nonl2}) for $w_p,w_q$ via bisection\;
    \eIf{$w_p < w_q$}{
        $UB \gets B_p$\;
    }{
        $LB \gets B_p$\;
    }
}
\end{algorithm}

The initial values for $LB,UB$ in Algorithm \ref{alg:mod_bisection} ensure that all nodes have an attack duration of at least two which ensures a nonzero capture probability. The final values for $LB,UB$ are candidates for the optimal value of $B_p$ and must be checked individually to identify the overall optimum. 

\begin{figure}[!htb]
    \centering
    \includegraphics[width=0.45\textwidth]{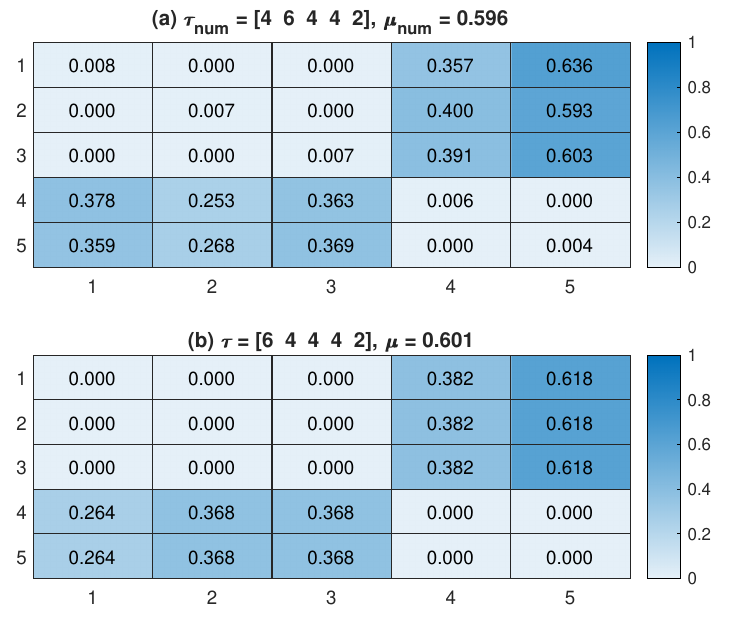}
    \caption{We consider a complete bipartite graph where $n_p = 3, n_q = 2$ and the defense budget $B = 20$. Optimized defense allocations $\tau$, transition matrices, and resulting capture probabilities $\mu$ come from (a) \texttt{fmincon} and (b) the proposed method.}
    \label{fig:num_ex}
\end{figure}

\section{Numerical Example}
Fig.~\ref{fig:num_ex}(b) shows a numerical example to illustrate the defense placement and patrol strategy computation. 
Algorithm~\ref{alg:mod_bisection} identifies $B_p = 14, B_q = 6$ as the optimal sub-budgets yielding $\tauvec^p = \begin{bmatrix} 6 & 4 & 4 \end{bmatrix}^\top, \tauvec^q = \begin{bmatrix} 4 & 2 \end{bmatrix}^\top$ as the optimal defense allocations. 
Note that this defense allocation leads to a $4.5\%$ higher capture probability than the uniform allocation $\tauvec^p = \begin{bmatrix} 4 & 4 & 4 \end{bmatrix}^\top, \tauvec^q = \begin{bmatrix} 4 & 4 \end{bmatrix}^\top$. 
The best defense allocation and transition matrix combination found by MATLAB's \texttt{fmincon} among 1000 random initializations is shown in Fig.~\ref{fig:num_ex}(a) and closely resembles our result.

\section{Conclusions}
In this brief, we generalized the Stackelberg game formulation for stochastic surveillance to accommodate heterogeneous defenses across the graph. We found an upper bound for the surveillance agent's probability of capturing the attacker. We presented methods for computing effective patrol strategies along with corresponding suboptimality bounds for complete graphs and complete bipartite graphs. We proved that the proposed patrol strategy for complete bipartite graphs is optimal in the special case of star graphs. Additionally, we identified the optimal defense allocations corresponding to our proposed patrol strategies for complete and complete bipartite graphs. 
In future, we will explore avenues for desirable extensions such as arbitrary graph topology and travel times on edges. 



\begin{arXiv}
    \section*{Appendix}
    \subsection*{Constant Factor Suboptimal Strategy on Complete Bipartite Graphs with Uniform Attack Duration}
    In \cite{XD-DP-FB:19b}, the optimal strategy for star graphs is presented in the case of uniform attack duration. Complete bipartite graphs can be viewed as a generalization of star graphs. Note that $n_q \geq 2, n_p \geq 2, n \geq 4$ because if either set has cardinality 1, then the graph is a star. Also note that $2 \leq \tau \leq 2n-4$ for a nontrivial game. Motivated by the results in Section \ref{sect:opt_bipart}, consider the following strategy:
    \begin{equation} \label{eq:P_bipart}
        P = \begin{bmatrix}
                \0_{n_q \times n_q} & (1/n_p)\1_{n_q \times n_p} \\
                (1/n_q)\1_{n_p \times n_q} & \0_{n_p \times n_p} \\
            \end{bmatrix}.
    \end{equation}
    The corresponding stationary distribution is
    \begin{equation}
        \pivec^\top = 
        \begin{bmatrix}
            \Bigl(\frac{1}{2n_q}\Bigr)\1_{1\times n_q} & \Bigl(\frac{1}{2n_p}\Bigr)\1_{1\times n_p}
        \end{bmatrix}.
    \end{equation}
    Using the recursion~\eqref{eq:F_k}, it can be shown that the capture probability $\mu$ of this strategy is
    \begin{equation} \label{eq:mu_bipart}
        \mu = \min \Biggl\{ 1-\biggl(1-\frac{1}{n_q}\biggr)^{\lfloor \tau/2 \rfloor}, 1-\biggl(1-\frac{1}{n_p}\biggr)^{\lfloor \tau/2 \rfloor} \Biggr\}
    \end{equation}
    where the minimum is attained by the larger of $n_q,n_p$. Using this expression, we can derive a constant factor suboptimality bound that is precisely half the bound derived in \cite{XD-DP-FB:19b} for the analogous complete graph strategy.
    \begin{theorem}[Constant Factor Suboptimal Strategy for Complete Bipartite Graphs]
        Given a complete bipartite digraph $\mathcal{G} = (\mathcal{P},\mathcal{Q},\mathcal{E})$ with $|\mathcal{P}| = n_q,|\mathcal{Q}| = n_p$ and an attack duration $\tau \in \natural$, the following inequalities hold:
        \begin{equation}
            \begin{cases}
                \frac{\mu}{\mu^*} \geq \frac{1}{3}, \quad & \tau \textup{ odd} \\
                \frac{\mu}{\mu^*} \geq \frac{1}{2}(1-\frac{1}{e}), \quad & \tau \textup{ even}
            \end{cases}
        \end{equation} 
        where $\mu$ is the capture probability corresponding to the strategy given in Eq.~\eqref{eq:P_bipart} and $\mu^*$ is the optimal capture probability.
    \end{theorem}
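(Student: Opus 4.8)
The plan is to play the explicit capture probability of the proposed strategy off against the universal upper bound on $\mu^*$. Write $n := n_p+n_q$ and $m := \max\{n_p,n_q\}$. By~\eqref{eq:mu_bipart} the strategy~\eqref{eq:P_bipart} attains $\mu = 1-(1-1/m)^{\lfloor\tau/2\rfloor}$ (the minimum in~\eqref{eq:mu_bipart} being at the larger side), while Theorem~\ref{thm:up_bound} specialized to uniform attack duration --- equivalently the bound of~\cite{XD-DP-FB:19b} --- together with $\pi^*_{\min}\le 1/n$ gives $\mu^* \le \min\{1,\tau/n\}$. Since each side of the bipartition has at least two nodes, $n \ge m+2$, so $\tau/n < \tau/m$. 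The proof then reduces to lower-bounding $\mu\cdot\max\{1,n/\tau\}$, for which I would prepare two elementary estimates on $\mu = 1-(1-1/m)^{\ell}$ with $\ell := \lfloor\tau/2\rfloor$: a \emph{continuous} bound $\mu \ge (1-1/e)\min\{1,\ell/m\}$, coming from $1-1/m < e^{-1/m}$ and concavity of $1-e^{-x}$ (so $1-e^{-x}\ge(1-1/e)\min\{1,x\}$ on $x\ge 0$); and a \emph{discrete} bound $\mu > \ell/(m+\ell)$, coming from $(1-1/m)^{\ell} < e^{-\ell/m} < 1/(1+\ell/m)$.

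For $\tau$ even, say $\tau=2s$, the strategy uses the full half-horizon $\ell=s$. Because $n>m$, $\min\{1,\tau/n\}\le\min\{1,2s/m\}\le 2\min\{1,s/m\}$, so by the continuous estimate $\mu/\mu^* \ge \mu/(2\min\{1,s/m\}) \ge \tfrac12(1-1/e)$. This is exactly half of the complete-graph bound $1-1/e$ of~\cite{XD-DP-FB:19b}, the factor $2$ entering through $\tau/n \le 2\,(\tau/2)/m$, i.e.\ the period-$2$ bipartite structure. For $\tau$ odd, say $\tau=2k+1$ with $k\ge1$ (the case $\tau=1$ is degenerate, $\mu=\mu^*=0$), the strategy only uses $\ell=k$, and I would split on whether $m<2k$ or $m\ge 2k$. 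If $m<2k$, the discrete estimate already gives $\mu > k/(m+k) > 1/3$, hence $\mu/\mu^*\ge\mu>1/3$ since $\mu^*\le1$. If $m\ge2k$, then $\tau=2k+1<m+2\le n$, so $\mu^*\le\tau/n\le(2k+1)/(m+2)$ and $\mu/\mu^*\ge\mu\,(m+2)/(2k+1) > k(m+2)/\big((m+k)(2k+1)\big)$; this is $\ge1/3$ because $3k(m+2)-(m+k)(2k+1)=m(k-1)+k(5-2k)$, which equals $3$ when $k=1$, equals $m+2$ when $k=2$, and for $k\ge3$ is at least $2k(k-1)+k(5-2k)=3k>0$ using $m\ge2k$.

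The delicate part is the odd case. Because $\lfloor\tau/2\rfloor=(\tau-1)/2$ drops a half-step relative to the bipartite period, the slick even-case reduction degrades to only $\tfrac13(1-1/e)\approx0.21$, so extracting the sharper constant $1/3$ genuinely requires the discrete Bernoulli-type estimate in place of the exponential one, followed by the short arithmetic case split above. That this is the right estimate is visible from the near-tightness of $1/3$: in the family $\tau=3$, $n_p=2$, $n_q\to\infty$ one has $\mu=1/n_q$ and $\mu^*\le3/(n_q+2)$, so $\mu/\mu^*\to1/3$.
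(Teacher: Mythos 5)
Your proof is correct, and while it follows the same overall blueprint as the paper --- pit the explicit capture probability $\mu = 1-(1-1/m)^{\lfloor \tau/2\rfloor}$ from \eqref{eq:mu_bipart} (with $m=\max\{n_p,n_q\}$) against the universal bound $\mu^*\le \min\{1,\tau/n\}$ --- the execution of the minimization step is genuinely different. The paper first collapses to a single size parameter via $m\le n-2$ (its Eq.~\eqref{eq:mu_bipart_bd}) and then runs a calculus argument on the resulting functions $h(n,\tau)$ and $g(n,\tau)$: signs of $\partial h/\partial n$ with a case split $\tau\in\{3\},\{5\},\{7,9,\dots\}$ for odd $\tau$, and a monotonicity-in-$\tau$ argument culminating in the decreasing sequence $y(n)\to\frac12(1-1/e)$ for even $\tau$. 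You instead keep $m$ and $n$ separate (using only $n\ge m+2$) and substitute two clean elementary estimates: the chord bound $1-e^{-x}\ge(1-1/e)\min\{1,x\}$, which dispatches the even case in one line and makes transparent that the factor $\tfrac12$ is exactly the cost of $\lfloor\tau/2\rfloor$ versus $\tau$; and the discrete bound $\mu>\ell/(m+\ell)$, which is essentially the same Bernoulli-type inequality $(1+x)^r\le 1/(1-rx)$ the paper uses, but which you then finish with an integer case split on $k$ and $m\lessgtr 2k$ rather than differentiation. Your route is shorter and arguably more illuminating (it even yields the strict inequality $\mu/\mu^*>1/3$ in the odd case and correctly locates the near-tight family $\tau=3$, $n_q\to\infty$, matching the paper's $h(n,3)\to 1/3$); the paper's route has the minor advantage of exhibiting the exact worst-case ratio $g(n,2n-4)=y(n)$ for each fixed $n$ in the even case. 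All the individual steps in your argument check out, including the algebraic identity $3k(m+2)-(m+k)(2k+1)=m(k-1)+k(5-2k)$ and its positivity for $k=1,2$ and for $k\ge3$ under $m\ge 2k$.
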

    \begin{proof}
    First we remove the dependency in Eq.~\eqref{eq:mu_bipart} on $n_q,n_p$. Because the function $f(x,t) = 1 - (1 - 1/x)^t$ is decreasing in $x$ for $x \geq 2, t \in \natural$, we have the following lower bound for $\mu$:
    \begin{equation} \label{eq:mu_bipart_bd}
        \mu \geq 1-\biggl(1-\frac{1}{n-2}\biggr)^{\lfloor \tau/2 \rfloor}.
    \end{equation}
    Now we consider separately both cases for the parity of $\tau$.
    
    \textbf{$\tau$ odd}: 
    Consider the ratio of the capture probability bound in Eq.~\eqref{eq:mu_bipart_bd} to the optimal capture probability bound $\mu^* \leq \tau/n$ from \cite{XD-DP-FB:19b}:
    \begin{equation}
        \frac{\mu}{\mu^*} \geq \frac{1-\bigl(1-\frac{1}{n-2}\bigr)^{(\tau - 1)/2}}{\tau / n}
    \end{equation}
    where $\tau \in \{2k+1: k\in \{1,\ldots,n-3 \} \}$. Using the inequality $(1 + x)^r \leq 1/(1-rx)$ for $r\geq 0, x\in [-1,1/r]$, we have the following:
    \begin{equation}
        \frac{\mu}{\mu^*} \geq \frac{n(\tau - 1)}{\tau(2n+\tau - 5)} := h(n,\tau).
    \end{equation}
    Differentiating $h(n,\tau)$ w.r.t.\ $n$ yields
    \begin{equation}
        \frac{\partial h}{\partial n} = \frac{(\tau - 5)(\tau - 1)}{\tau(2n+\tau -5)}.
    \end{equation}
    Consider three cases:
    \begin{equation}
        \begin{cases}
            \frac{\partial h}{\partial n} < 0, \quad & \tau = 3 \\
            \frac{\partial h}{\partial n} = 0, \quad & \tau = 5 \\
            \frac{\partial h}{\partial n} > 0, \quad & \tau \in \{2k+1: k\in \{3,\ldots,n-3 \} \}.
        \end{cases}
    \end{equation}
    For $\tau = 3$ we have the following lower bound for $h(n,\tau)$:
    \begin{equation}
        h(n,3) \geq \lim_{n \to \infty} h(n,3) = \lim_{n \to \infty} \frac{n}{3n-3} = \frac{1}{3}.
    \end{equation}
    For $\tau = 5$ we have the following:
    \begin{equation}
        h(n,5) = h(5,5) = \frac{2}{5}.
    \end{equation}
    For $\tau \in \{2k+1: k\in \{3,\ldots,n-3 \} \}$ we have that $n \geq (\tau + 5)/2$ which implies the following:
    \begin{equation}
    \begin{aligned}
        h(n,\tau) & \geq h((\tau + 5)/2,\tau) = \frac{(\tau + 5)(\tau - 1)}{4\tau^2} \\
        & \geq h(6,7) = \frac{18}{49}.
    \end{aligned}
    \end{equation}
    where the second inequality arises because $h((\tau + 5)/2,\tau)$ is decreasing in $\tau$. Taking the smallest of the three lower bounds we have the desired result for the $\tau$ odd case: $\mu/\mu^* \geq 1/3$.
    
    \textbf{$\tau$ even}: 
    Now consider the ratio of the capture probability bound in Eq.~\eqref{eq:mu_bipart_bd} to the optimal capture probability bound $\mu^* \leq \tau/n$ from \cite{XD-DP-FB:19b}:
    \begin{equation}
        \frac{\mu}{\mu^*} \geq \frac{1-\bigl(1-\frac{1}{n-2}\bigr)^{\tau/2}}{\tau / n} := g(n,\tau)
    \end{equation}
    where $\tau \in \{ 2k: k\in \{1,\ldots,n-2 \} \}$. Differentiating $g(n,\tau)$ w.r.t.\ $\tau$ yields
    \begin{equation}
        \frac{\partial g}{\partial \tau} = \frac{n}{\tau^2} \Biggl[ \biggl(1 - \frac{\tau}{2} \ln{\biggl( \frac{n-3}{n-2} \biggr)} \biggr) \biggl(\frac{n-3}{n-2}\biggr)^{\tau/2} - 1\Biggr].
    \end{equation}
    Using the inequality $\ln{x} \geq 1 - 1/x$, we have
    \begin{equation}
        \frac{\partial g}{\partial \tau} \leq \frac{n}{\tau^2} \Biggl[ \biggl(1 + \frac{\tau}{2(n-3)} \biggr) \biggl(\frac{n-3}{n-2}\biggr)^{\tau/2} - 1\Biggr].
    \end{equation}
    Define the following:
    \begin{equation}
    \begin{aligned}
        p(n,\tau) & := \biggl(1 + \frac{\tau}{2(n-3)} \biggr) \biggl(\frac{n-3}{n-2}\biggr)^{\tau/2}, \\ 
        q(n,\tau) & := \ln{\bigl( p(n,\tau) \bigr)}
    \end{aligned}
    \end{equation}
    where we want to show that $\frac{\partial g}{\partial \tau} \leq 0$ by showing that $q(n,\tau) \leq 0$ and $p(n,\tau) \leq 1$ for $\{ \tau = 2k: k\in \{1,\ldots,n-2 \} \}$. Differentiating $q(n,\tau)$ w.r.t.\ $\tau$ yields
    \begin{equation}
    \begin{aligned}
        \frac{\partial q}{\partial \tau} & = \frac{1}{2n+\tau -6} + \frac{1}{2} \ln{\biggl(\frac{n-3}{n-2} \biggr)} \\
        & \leq \frac{1}{2n+\tau -6} - \frac{1}{2n-4} \leq 0
    \end{aligned}
    \end{equation}
    where we have used that $\ln{x} \leq x-1$ for the first inequality. Because $q(n,\tau)$ is decreasing in $\tau$, $p(n,\tau)$ is also decreasing in $\tau$. Note that $q(n,2) = 0$ and $p(n,2) = 1$. Therefore, $\frac{\partial g}{\partial \tau} \leq 0$ which implies that
    \begin{equation}
        g(n,\tau) \geq g(n,2n-4) = \frac{1-\bigl(1-\frac{1}{n-2}\bigr)^{n-2}}{(2n-4)/ n} := y(n).
    \end{equation}
    It can be shown that $y(n)$ is decreasing and thus we have the desired result for the $\tau$ even case:
    \begin{equation}
        \frac{\mu}{\mu^*} \geq g(n,\tau) \geq \lim_{n \to \infty} y(n) = \frac{1}{2} \biggl(1 - \frac{1}{e} \biggr).
    \end{equation}
    \end{proof}
\end{arXiv}


\bibliographystyle{plain}
\bibliography{bib/alias,bib/FB,bib/Main,bib/YJ}

\end{document}